\let\cite=\citep
\newcommand{\ie}{\emph{i.e.}}
\newcommand{\eg}{\emph{e.g.}}
\newcommand{\etal}{\emph{et al.}}
\newcommand{\syn}[1]{\mathit{#1}} 
\newcommand{\var}[1]{\mathit{#1}}
\newcommand{\s}[1]{\mathit{#1}}
\newcommand{\parto}{\rightarrow_{\mathrm{fin}}}
\newcommand{\dom}{\var{dom}}
\newcommand{\set}[1]{\left\{#1\right\}}
\newcommand{\setbuild}[2]{\left\{ #1 : #2\right\}}
\newcommand{\Pow}[1]{{\mathcal{P}\left(#1\right)}}
\newcommand{\union}{\cup}
\newcommand{\wt}{\sqsubseteq}
\newcommand{\join}{\sqcup}
\newcommand{\meet}{\sqcap}
\newcommand{\bigjoin}{\bigsqcup}
\newenvironment{grammar}{\begin{array}{r@{\;}c@{\;}l@{\;}c@{\;}l@{\;\;\;\;}l}}{\end{array}}
\newcommand{\opor}{\mathrel{|}}
\newcommand{\produces}{=}
\newcommand{\vv}{x}
\newcommand{\lam}{\ensuremath{\var{lam}}}
\newcommand{\schfalse}{{\mbox{\tt\#f}}}
\newcommand{\ttlp}{\mbox{\tt (}}
\newcommand{\ttrp}{\mbox{\tt )}}
\newcommand{\appform}[2]{\ttlp #1 #2\ttrp}
\newcommand{\lamform}[2]{\ttlp \uplambda #1.#2\ttrp}
\newcommand{\ifform}[3]{\ttlp {\tt if}\; #1\; #2\; #3\ttrp}
\newcommand{\throwform}[1]{\ttlp {\tt throw}\; #1\ttrp}
\newcommand{\expr}{e}
\newcommand{\State}{\Sigma}
\newcommand{\state}{\varsigma}
\newcommand{\Env}{\s{Env}}
\newcommand{\Den}{\s{Val}} 
\newcommand{\store}{\sigma}
\newcommand{\env}{\rho}
\newcommand{\cont}{\kappa}
\newcommand{\alloc}{\mathit{alloc}}
\newcommand{\addr}{a}
\newcommand{\tm}{t}
\newcommand{\tick}{{{tick}}}
\newcommand{\sa}[1]{\widehat{\mathit{#1}}}
\newcommand{\aState}{{\hat{\Sigma}}}
\newcommand{\astate}{{\hat{\varsigma}}}
\newcommand{\astore}{{\hat{\sigma}}}
\newcommand{\aaddr}{{\hat{\addr}}}
\newcommand{\aalloc}{{\widehat{alloc}}}
\newcommand{\atick}{{\widehat{tick}}}
\newcommand{\absmap}{\alpha}
\newcommand{\abs}[1]{|#1|}
\newsavebox{\machinename}
\newcommand\mrule[2]{
  #1 & \longmapsto_{\usebox{\machinename}} & #2 } 
\newcommand\mnext{\\[1mm]}
\newcommand\triple[3]{\langle{#1},{#2},{#3}\rangle}
\newcommand\quadruple[4]{\triple{#1}{#2}{{#3},{#4}}}
\newenvironment
    {machine}[1]
    {\savebox{\machinename}{\ensuremath{_{#1}}}%
     \[
      \begin{array}{l@{\;}c@{\;}l}}
    {\\\end{array}
      \]}
\newcommand\mctx{\mathcal{E}}
\newcommand\CESP{CESK$^\star$}
\newcommand{\multistep}{\longmapsto\!\!\!\!\!\!\!\rightarrow}
\newcommand{\handls}{\eta}
\newcommand{\handk}{\mathbf{hn}}
\newcommand{\catchform}[2]{\ttlp {\tt catch}\; #1\; #2\ttrp}
\newcommand\tmnext{u}
\newcommand\haddr{h}
\newcommand\addrnext{b}
\newcommand\addrnextnext{c}
\newcommand\Storable{Storable}
\newcommand\sto{s}
\newcommand\den{v}
\newcommand\Perm{\mathcal{P}}
\newcommand\mtk{\mathbf{mt}}
\newcommand\fnk{\mathbf{fn}}
\newcommand\ark{\mathbf{ar}}
\newcommand\kok{\mathbf{c_1}}
\newcommand\ktk{\mathbf{c_2}}
\newcommand\computed{\mathbf{c}}
\newcommand\delayed{\mathbf{d}}
\newcommand\mte{\emptyset} 
\newcommand\mts{\emptyset} 
\newcommand\liveloc{\mathit{LL}}
\newcommand\no{\text{deny}}
\newcommand\grant{\text{grant}}
\newcommand\OK{\mathit{OK}}
\newcommand\AOK{\widehat{\mathit{OK}^\star}}
\newcommand\betavalue{{\beta_{\mathsf{v}}}}
\newcommand\grantform[2]{\ttlp{\tt grant}\; #1\; #2\ttrp}
\newcommand\testform[3]{\ttlp{\tt test}\; #1\; #2\; #3\ttrp}
\newcommand\failexpr{{\tt fail}}
\newcommand\frameform[2]{\ttlp{\tt frame}\; #1\; #2\ttrp}
\newcommand\evf{\mathit{eval}} 
\newcommand\avf{\mathit{aval}} 
\newcommand\inj{\mathit{inj}} 
\newcommand\fv{\mathbf{fv}}
\newcommand\restrict[2]{#1 | #2}
\newtheorem{theorem}{Theorem}
\newtheorem{lemma}{Lemma}
\title{Systematic Abstraction of Abstract Machines}
\author[]{David Van Horn\\
  Northeastern University}
\author[D.~Van Horn and M.~Might]{Matthew Might\\
  University of Utah}
\begin{document}

\maketitle
\begin{abstract}

  We describe a derivational approach to abstract interpretation that yields
  novel and transparently sound static analyses when applied to
  well-established abstract machines for higher-order and imperative
  programming languages.  To demonstrate the technique and support our claim,
  we transform the CEK machine of Felleisen and Friedman, a lazy variant of
  Krivine's machine, and the stack-inspecting CM machine of Clements and
  Felleisen into abstract interpretations of themselves.  The resulting
  analyses bound temporal ordering of program events; predict return-flow and
  stack-inspection behavior; and approximate the flow and evaluation of
  by-need parameters.  For all of these machines, we find that a series of
  well-known concrete machine refactorings, plus a technique of
  store-allocated continuations, leads to machines that abstract into static
  analyses simply by bounding their stores.  We demonstrate that the
  technique scales up uniformly to allow static analysis of realistic
  language features, including tail calls, conditionals, side effects,
  exceptions, first-class continuations, and even garbage collection.
  In order to close the gap between formalism and implementation, we
  provide translations of the mathematics as running Haskell code for
  the initial development of our method.

\end{abstract}

%

\section*{Introduction}


Program analysis aims to soundly predict properties of
programs before being run. 
%
%
%
For over thirty years, the research community has expended significant
effort designing effective analyses for higher-order
programs~\cite{dvanhorn:Midtgaard2011Controlflow}.
Past approaches have focused on connecting high-level language
semantics such as structured operational semantics, denotational
semantics, or reduction semantics to equally high-level but dissimilar
analytic models.  These models are too often far removed from their
programming language counterparts and take the form of constraint
languages specified as relations on sets of program
fragments~\cite{dvanhorn:wright-jagannathan-toplas98,dvanhorn:Neilson:1999,dvanhorn:Meunier2006Modular}.
These approaches require significant ingenuity in their design and
involve complex constructions and correctness arguments, making it
difficult to establish soundness, design algorithms, or grow the
language under analysis.  Moreover, such analytic models, which focus
on ``value flow'', \ie,~determining which syntactic values may show
up at which program sites at run-time, have a limited capacity to
reason about many low-level intensional properties such as memory
management, stack behavior, or trace-based properties of computation.
Consequently, higher-order program analysis has had limited impact on
large-scale systems, despite the apparent potential for program
analysis to aid in the construction of reliable and efficient
software.

In this paper, we describe a \emph{systematic approach to program
  analysis} that overcomes many of these limitations by providing a
straightforward derivation process, lowering verification costs and
accommodating sophisticated language features and program properties.

Our approach relies on leveraging existing techniques to transform
high-level language semantics into abstract machines---low-level
deterministic state-transition systems with potentially infinite state
spaces. Abstract machines~\cite{dvanhorn:landin-64}, and the paths
from semantics to
machines~\cite{mattmight:Reynolds:1972:Definitional,dvanhorn:Danvy:DSc,dvanhorn:Felleisen2009Semantics}
have a long history in the research on programming languages.
From canonical abstract machines such as the CEK machine or Krivine's
machine, which represent the idealized core of realistic run-time
systems, we perform a series of basic machine refactorings to obtain a
\emph{nondeterministic} state-transition system with a \emph{finite}
state space.  The refactorings are simple: variable bindings and
continuations are redirected through the machine's store, and the
store is bounded to a finite size.  Due to finiteness, store updates
must become merges, leading to the possibility of multiple values
residing in a single store location.  This in turn requires store
look-ups be replaced by a nondeterministic choice among the multiple
values at a given location.  The derived machine computes a sound
approximation of the original machine, and thus forms an
\emph{abstract interpretation} of the machine and the high-level
semantics.

We demonstrate that the technique allows a direct structural
abstraction by bounding the machine's store.
(A structural abstraction distributes component-, point-, and member-wise.)
The approach scales up uniformly to enable program analysis of
realistic language features, including higher-order functions, tail
calls, conditionals, side effects, exceptions, first-class
continuations, and even garbage collection.
Thus, we are able to refashion semantic techniques used to model
language features into abstract interpretation techniques for
reasoning about the behavior of those very same features.

To demonstrate the applicability of the approach, we derive abstract
interpreters of:
\begin{itemize}
\item a call-by-value $\lambda$-calculus with state and control based
  on the CESK machine of Felleisen and
  Friedman~\cite{dvanhorn:Felleisen1987Calculus},

\item a call-by-need $\lambda$-calculus based on a tail-recursive,
  lazy variant of Krivine's machine derived by Ager, Danvy and
  Midtgaard~\cite{dvanhorn:Ager2004Functional}, and

\item a call-by-value $\lambda$-calculus with stack inspection based
  on the CM machine of Clements and
  Felleisen~\cite{dvanhorn:Clements2004Tailrecursive};
\end{itemize}
and use abstract  garbage
collection to improve 
precision~\cite{mattmight:Might:2006:GammaCFA}.

Finally, we also show that by forgoing stack-allocated continuations,
we obtain \emph{pushdown} abstract interpretations of programs that
form nondeterministic state-transition systems with potentially
\emph{infinite} state-spaces.  Such abstractions, which constitute
recent research breakthroughs
\cite{dvanhorn:Vardoulakis2011CFA2,dvanhorn:Earl2010Pushdown},
precisely match calls to returns and enjoy a natural formulation in
our approach.

\subsection*{Overview}

In Section~\ref{sec:cek-to-acesk}, we begin with the CEK machine and
attempt a structural abstract interpretation, but find ourselves
blocked by two recursive structures in the machine: environments and
continuations.
We make three refactorings to:
\begin{enumerate}
\item store-allocate bindings,
\item store-allocate continuations, and 
\item time-stamp machine states;
\end{enumerate}
resulting in the CESK, CESK$^\star$, and time-stamped CESK$^\star$
machines, respectively.
The time-stamps encode the history (context) of the machine's
execution and facilitate context-sensitive abstractions.
We then demonstrate that the time-stamped machine abstracts directly
into a parameterized, sound and computable static analysis.

In section~\ref{sec:labelled-acesk}, we instantiate the analysis to
obtain a $k$-CFA-like abstraction and show how to perform
store-widening to obtain a polynomial-time 0-CFA abstraction.
In Section~\ref{sec:krivine}, we replay the abstraction process (slightly
abbreviated) with a lazy variant of Krivine's machine to arrive at a
static analysis of by-need programs.
In Section~\ref{sec:realistic-features}, we incorporate
conditionals, side effects, exceptions, and
first-class continuations.
In Section~\ref{sec:garbage-collection}, we show how run-time garbage
collection naturally induces a notion of abstract garbage collection,
which can improve analysis precision and performance.
In Section~\ref{sec:CM}, we abstract the CM (continuation-marks)
machine to produce an abstract interpretation of stack inspection.

\subsection*{Background and notation}

We assume a basic familiarity with reduction semantics and abstract
machines.  For background and a more extensive introduction to the
concepts, terminology, and notation employed in this paper, we refer
the reader to \emph{Semantics Engineering with PLT
  Redex}~\cite{dvanhorn:Felleisen2009Semantics}.

\section{From CEK to the abstract CESK\texorpdfstring{$^\star$}{*}}
\label{sec:cek-to-acesk}

In this section, we start with a traditional machine for a programming
language based on the call-by-value $\lambda$-calculus, and gradually
derive an
abstract interpretation of this machine.
%
The outline followed in this section covers the basic steps for
systematically deriving abstract interpreters that we follow
throughout the rest of the paper.

To begin, consider the following language of expressions:%
\[
\begin{grammar}
  \expr &\in& \syn{Exp} &\produces& 
 \vv \opor \appform{\expr}{\expr} \opor   \lamform{\vv}{\expr}
\\
  \vv &\in& \syn{Var} & &\text{ a set of identifiers}
  \text.
\end{grammar}
\]
Or, when encoded as an abstract syntax tree in Haskell:
\begin{centercode}
type Var    = String
data Lambda = Var :=> Exp
data Exp    = Ref Var
            | Lam Lambda
            | Exp :@ Exp\end{centercode}

A standard machine for evaluating this language is the CEK machine of
\citet{mattmight:Felleisen:1986:CEK}, and it is
from this machine we derive the abstract semantics---a computable
approximation of the machine's behavior.
Most of the steps in this derivation correspond to
well-known machine transformations and real-world implementation
techniques---and most of these steps are concerned only with the
\emph{concrete machine}; a very simple abstraction is employed only at
the very end.

The remainder of this section is outlined as follows: we present the
CEK machine, to which we add a store, and use it to allocate variable
bindings.  This machine is just the CESK machine of
\citet{dvanhorn:Felleisen1987Calculus}.  From here, we further exploit
the store to allocate continuations, which corresponds to a well-known
implementation technique used in functional language
compilers~\cite{dvanhorn:Shao1994Spaceefficient}.  We then abstract
\emph{only the store} to obtain a framework for the sound, computable
analysis of programs.

\subsection{The CEK machine}
\label{sec:cek}

A standard approach to evaluating programs is to rely on a
Curry-Feys-style Standardization Theorem, which says roughly: if an
expression $\expr$ reduces to $\expr'$ in, \eg, the call-by-value
$\lambda$-calculus, then $\expr$ reduces to $\expr'$ in a canonical
manner.  This canonical manner thus determines a state machine for
evaluating programs: a standard reduction machine.

To define such a machine for our language, we define a grammar of
evaluation contexts and notions of reduction (\eg, $\betavalue$).  An
evaluation context is an expression with a ``hole'' in it.  For
left-to-right evaluation order, we define evaluation contexts $\mctx$ as:
\[
\begin{grammar}
 && \mctx &\produces& [\;] \opor \appform{\mctx}{\expr} \opor \appform{\den}{\mctx}\text.
\end{grammar}
\]
An expression is either a value or uniquely decomposes into an
evaluation context and redex.  The standard reduction machine is:
\[
\mctx[\expr] \longmapsto_\betavalue \mctx[\expr'],\mbox{ if } \expr\;\betavalue\;\expr'\text.
\]
However, this machine does not shed much light on a realistic
implementation.  (Accordingly, we will omit a Haskell implementation.) 
At each step, the machine traverses the entire
source of the program looking for a redex.  When found, the redex is
reduced and the contractum is plugged back in the hole, then the
process is repeated.

Abstract machines such as the CEK
machine~\cite{dvanhorn:Felleisen1986Control}, which are derivable from
standard reduction machines, offer an extensionally equivalent but
more realistic model of evaluation that is amenable to efficient
implementation.  The CEK is environment-based; it uses environments
and closures to model substitution.  It represents evaluation contexts
as \emph{continuations}, an inductive data structure that models
contexts in an inside-out manner.  The key idea of machines such as
the CEK is that the whole program need not be traversed to find the
next redex, consequently the machine integrates the process of
plugging a contractum into a context and finding the next redex.

States of the CEK machine consist of a control string (an expression),
an environment that closes the control string, and a continuation:
\[
\begin{grammar}
 \state &\in& \State &=& 
  \syn{Exp} \times \s{Env} \times \s{Cont} 
\\
   \den &\in& \s{Val} &\produces& \lamform{\vv}{\expr}
\\
  \env &\in &\s{Env} &=& \syn{Var} \parto \s{Val} \times \s{Env}
\\
  \kappa &\in& \s{Cont} &\produces&
  \mtk \opor \ark(\expr,\env,\kappa) \opor \fnk(\den,\env,\kappa)\text.
\end{grammar}
\]
States are identified up to consistent renaming of bound variables.

Environments are finite maps from variables to closures.
Environment extension is written $\env[\vv\mapsto(\den,\env')]$.

The definition of the state-space in Haskell is similar:
\begin{centercode}
type \(\Sigma\)    = (Exp,Env,Cont)
data D    = Clo(Lambda, Env)
type Env  = Var :-> D
data Cont = Mt | Ar(Exp,Env,Cont) | Fn(Lambda,Env,Cont)\end{centercode}
A notable difference is the need to thread values through a datatype in order to break
the unbounded recursion in the type of environments.
In this case, datatype \texttt{D} contains \emph{d}enotable values.
Type operator \texttt{:->} is a synonym for the finite map \texttt{Data.Map.Map}:
\begin{centercode}
type k :-> v = Data.Map.Map k v\end{centercode}
A little syntactic sugar makes functional extension in Haskell look more
like its corresponding formal notation:
\begin{centercode}
(==>) :: a -> b -> (a,b)
(==>) x y = (x,y)

(//) :: Ord a => (a :-> b) -> [(a,b)] -> (a :-> b)
(//) f [(x,y)] = Data.Map.insert x y f\end{centercode}
so that \texttt{$\env$ // [v ==> d]} yields a map identical to $\env$ except at \texttt{v}.

Evaluation contexts $\mctx$ are represented (inside-out) by continuations
as follows: $[\; ]$ is represented by $\mtk$; $\mctx[\appform{[\;
  ]}{\expr}]$ is represented by $\ark(\expr',\env,\cont)$ where $\env$
closes $\expr'$ to represent $\expr$ and $\cont$ represents $\mctx$;
$\mctx[\appform{\den}{[\; ]}]$ is represented by $\fnk(\den',\env,\cont)$
where $\env$ closes $\den'$ to represent $\den$ and $\cont$ represents $\mctx$.

The transition function for the CEK machine is defined as follows
(we follow the textbook treatment of the CEK
machine~\cite[page 102]{dvanhorn:Felleisen2009Semantics}):
\begin{machine}{CEK}
  \mrule
      {\triple\vv\env\cont}
      {\triple\den{\env'}\cont
        \text{ where }\env(\vv) = (\den,\env')}
  \mnext
  \mrule
      {\triple{\appform{\expr_0}{\expr_1}}\env\cont}
      {\triple{\expr_0}\env{\ark(\expr_1, \env, \cont)}}
  \mnext
  \mrule
      {\triple\den\env{\ark(\expr,\env',\cont)}}
      {\triple\expr{\env'}{\fnk(\den,\env,\cont)}}
  \mnext
  \mrule
      {\triple\den\env{\fnk(\lamform{\vv}{\expr},\env',\cont)}}
      {\triple\expr{\env'[\vv \mapsto (\den,\env)]}\cont}
\end{machine}

Now, we have to render the transition relation
$\longmapsto_{\text{\usebox{\machinename}}} \subseteq \State \times \State$ as
code.
There are many ways to render a relation $R \subseteq A \times B$ in code.
For finite relations, we could construct $R$ as a set of pairs. 
For infinite relations, we could render $R$ as a predicate:
\begin{equation*}
 R \cong A \times B \to \mathit{Boolean}
 \text,
\end{equation*}
or as a function:
\begin{equation*}
 R \cong A \to \Pow{B}
 \text.
\end{equation*}
In the Haskell implementation, we render the transition relation as a function,
\texttt{step}:
{\small
\begin{centercode}
step :: \(\Sigma\) -> \(\Sigma\)
step (Ref x, \(\env\), \(\cont\)) = (Lam lam,\(\env\)',\(\cont\)) where Clo(lam, \(\env\)') = \(\env\)!x
step (f :@ e, \(\env\), \(\cont\)) = (f, \(\env\),  Ar(e, \(\env\), \(\cont\)))
step (Lam lam, \(\env\), Ar(e, \(\env\)', \(\cont\))) = (e, \(\env\)', Fn(lam, \(\env\), \(\cont\)))
step (Lam lam, \(\env\), Fn(x :=> e, \(\env\)', \(\cont\))) = (e, \(\env\)' // [x ==> Clo(lam, \(\env\))], \(\cont\))\end{centercode}}

Since the transition relation is deterministic, we do not expand the range
of this function to a set.
The initial machine state for a closed expression $\expr$ is given by the
$\inj$ function:
\[ \inj_{\mathit{CEK}}(\expr) = \langle\expr,\mte,\mtk\rangle\text.  \]
In Haskell, the injection function is almost identical: 
\begin{centercode} 
inject :: Exp -> \(\Sigma\) 
inject (e) = (e, Data.Map.empty, Mt)\end{centercode}

Typically, an evaluation function is defined as a partial function from
closed expressions to answers:
\[
\evf'_{\mathit{CEK}}(e) =
 (\den,\env)
 \mbox{ if }
 \inj(\expr) \multistep_{\mathit{CEK}} 
 \langle\den,\env,\mtk\rangle\text.
\]
This gives an extensional view of the machine, which is useful,
\eg, to prove correctness with respect to a canonical
evaluation function such as one defined by standard reduction or
compositional valuation.
However for the purposes of program analysis, we are concerned more
with the intensional aspects of the machine.  As such, 
we define the meaning of a program as the (possibly
infinite) set of reachable machine states:
\[
\evf_{\mathit{CEK}}(\expr) = \{ \state\ |\  \inj(\expr) \multistep_{\mathit{CEK}} \state \}
\text.
\]
In Haskell, we can use a \texttt{collect} auxiliary function:
\begin{centercode}
collect :: (a -> a) -> (a -> Bool) -> a -> [a]
collect f isFinal \(\state\)0 | isFinal \(\state\)0 = [\(\state\)0]
                     | otherwise  = \(\state\)0:(collect f isFinal (f(\(\state\)0)))\end{centercode}
to define \texttt{evaluate}:
\begin{centercode}
evaluate :: Exp -> [\(\Sigma\)]
evaluate e = collect step isFinal (inject(e)) \end{centercode}
where the \texttt{isFinal} function watches for proper final states:
\begin{centercode}
isFinal :: \(\Sigma\) -> Bool
isFinal (Lam \_, \(\env\), Mt) = True
isFinal \_              = False\end{centercode}

\paragraph{An outline for abstract interpretation}
Deciding membership in the set of reachable machine states is not
possible due to the halting problem.  The goal of abstract
interpretation, then, is to construct a function,
$\avf_{\widehat{\mathit{CEK}}}$, that is a sound and computable
approximation to the $\evf_{\mathit{CEK}}$ function.

We can do this by constructing a machine that is similar in structure to
the CEK machine:
it is defined by an \emph{abstract state transition} relation
$(\longmapsto_{\widehat{\mathit{CEK}}}) \subseteq
\aState \times
\aState$, which operates over \emph{abstract states}, $\hat{\State}$,
which approximate the states of the CEK machine,
and an abstraction map $\absmap : \State \to \aState$ that maps
concrete machine states into abstract machine states.

The abstract evaluation function is then defined as:
\[
\avf_{\widehat{\mathit{CEK}}}(\expr) = \{ \astate\ |\ \alpha(\inj(\expr))
\multistep_{\widehat{\mathit{CEK}}} \astate \}
\text.
\]

\begin{enumerate}
\item We achieve \emph{decidability} by constructing the approximation in
such a way that the state-space of the abstracted machine is
finite, which guarantees that for any closed expression $\expr$, the
set $\avf(\expr)$ is finite.

\item We achieve \emph{soundness} by demonstrating the abstracted machine
transitions preserve the abstraction map, so that if
 \(\state \longmapsto \state' \text{ and }  \alpha(\state)
  \wt \astate\),
then there exists an abstract state $\astate'$ such that
  \(\astate
\longmapsto  \astate' \text{ and } \alpha(\state') \wt \astate'\).
\end{enumerate}

\subsection{A first attempt at abstract interpretation}
%
A simple approach to abstracting the machine's state-space is to apply
a {\em structural abstract interpretation}, which lifts abstraction
point-wise, element-wise, component-wise and member-wise across the
structure of a machine state (\ie,~expressions, environments,
and continuations).

The problem with the structural abstraction approach for the CEK
machine is that both environments and continuations are recursive
structures.
As a result, the map $\absmap$ yields objects in an abstract
state-space with recursive structure, implying the space is infinite.
It is possible to perform abstract interpretation over 
an infinite state-space, but it requires a widening operator.
A widening operator accelerates the ascent up the lattice of
approximation and must guarantee convergence.
It is difficult to imagine a widening operator, other than the one
that jumps immediately to the top of the lattice, for these semantics.

Focusing on recursive structure as the source of the problem, a
reasonable course of action is to add a level of indirection to the
recursion---to force recursive structure to pass through explicitly
allocated addresses.
In doing so, we will unhinge recursion in a program's data structures
and its control-flow from recursive structure in the state-space.

We turn our attention next to the CESK
machine~\cite{mattmight:Felleisen:1987:Dissertation,dvanhorn:Felleisen1987Calculus},
since the CESK machine eliminates recursion from one of the structures
in the CEK machine: environments.
In the subsequent section (Section~\ref{sec:ceskp}), we will develop a CESK machine with a pointer
refinement (\CESP{}) that eliminates the other source of recursive structure:
continuations.
At that point, the machine structurally abstracts via a single point
of approximation: the store.

\subsection{The CESK machine}
\label{sec:cesk}

The states of the CESK machine extend those of the CEK machine to
include a \emph{store}, which provides a level of indirection for
variable bindings to pass through.
The store is a finite map from \emph{addresses} to \emph{storable
  values} and 
environments are changed to map variables to addresses.
When a variable's value is looked-up by the machine, it is now
accomplished by using the environment to look up the variable's
address, which is then used to look up the value.
To bind a variable to a value, a fresh location in the store is
allocated and mapped to the value; the
environment is extended to map the variable to that address.

The state-space for the CESK machine is defined as follows:
\[
\begin{grammar}
  \state &\in& \State &=& 
  \syn{Exp} \times \s{Env} \times \s{Store} \times \s{Cont}
  \\
  \env &\in &\s{Env} &=& \syn{Var} \parto \s{Addr}
  \\
  \store &\in &\s{Store} &=& \s{Addr} \parto \Storable
  \\
  \sto &\in &\Storable &=& \syn{Lam} \times \Env
  \\
  \addr,\addrnext,\addrnextnext &\in &\s{Addr} & & \text{an infinite set}
  \text.
\end{grammar}
\]
States are identified up to consistent renaming of bound variables and
addresses.
In Haskell:
\begin{centercode}
type \(\State\) = (Exp,Env,Store,Kont)
type Env = Var :-> Addr
data Storable = Clo (Lambda, Env)
type Store = Addr :-> Storable
data Kont = Mt | Ar(Exp,Env,Kont) | Fn(Lambda,Env,Kont)
type Addr = Int\end{centercode}

The transition function for the CESK machine is defined as follows (we
follow the textbook treatment of the CESK machine~\cite[page
  166]{dvanhorn:Felleisen2009Semantics}):
\begin{machine}{CESK}
  \mrule
      {\quadruple\vv\env\store\cont}
      {\quadruple\den{\env'}\store\cont
        \mbox{ where }\store(\env(\vv)) = (\den,\env')}
  \mnext
  \mrule
      {\quadruple{\appform{\expr_0}{\expr_1}}\env\store\cont}
      {\quadruple{\expr_0}\env\store{\ark(\expr_1, \env, \cont)}}
  \mnext
  \mrule
      {\quadruple\den\env\store{\ark(\expr,\env',\cont)}}
      {\quadruple\expr{\env'}\store{\fnk(\den,\env,\cont)}}
  \mnext
  \mrule
      {\quadruple\den\env\store{\fnk(\lamform{\vv}{\expr},\env',\cont)}}
      {\quadruple\expr{\env'[\vv \mapsto \addr]}{\store[\addr \mapsto (\den,\rho)]}\cont
       \text{ where } \addr\notin\dom(\store)}
\end{machine}%
In Haskell, the transition relation is once again a function:
\begin{centercode}
step :: \(\Sigma\) -> \(\Sigma\) 
step (Ref x, \(\env\), \(\sigma\), \(\kappa\)) = (Lam lam, \(\env\)', \(\sigma\), \(\kappa\)) 
  where Clo (lam, \(\env\)') = \(\sigma\)!(\(\env\)!x)
step (f :@ e, \(\env\), \(\sigma\), \(\kappa\)) = (f, \(\env\), \(\sigma\),  Ar(e, \(\env\), \(\kappa\)))
step (Lam lam,\(\env\),\(\sigma\),Ar(e, \(\env\)', \(\kappa\))) = (e, \(\env\)', \(\sigma\), Fn(lam, \(\env\), \(\kappa\)))
step (Lam lam,\(\env\),\(\sigma\),Fn(x :=> e, \(\env\)', \(\kappa\))) =
  (e, \(\env\)' // [x ==> a'], \(\sigma\) // [a' ==> Clo (lam, \(\env\))], \(\kappa\)) 
    where a' = alloc(\(\sigma\))\end{centercode}
A key difference is that instead of choosing any address not currently
in the store for binding variables, we require a well-defined 
process for choosing a free address.
For that, we use the \texttt{alloc} function:
\begin{centercode}
alloc :: Store -> Addr
alloc(\(\sigma\)) = (foldl max 0 \$! keys \(\sigma\)) + 1
\end{centercode}

The initial state for a closed expression is given by the $\inj$
function, which combines the expression with the empty environment,
store, and continuation:
\[
\inj_{\mathit{CESK}}(\expr) = \langle\expr,\mte,\mts,\mtk\rangle\text.
\]
In Haskell:
\begin{centercode}
inject :: Exp -> \(\Sigma\)
inject (e) = (e, \(\env\)0, \(\sigma\)0, Mt) 
 where \(\env\)0 = Data.Map.empty
       \(\sigma\)0 = Data.Map.empty\end{centercode}

The $\evf_{\mathit{CESK}}$ evaluation function is defined
following the template of the CEK evaluation given in
Section~\ref{sec:cek}:
\[
\evf_{\mathit{CESK}}(\expr) = \{ \state\ |\  \inj(\expr) \multistep_{\mathit{CESK}} \state \}
\text,
\]
which is identical in Haskell to the prior version,
except for the final-state recognizer:
\begin{centercode}
isFinal :: \(\State\) -> Bool
isFinal (Lam \_, \_, \_, Mt) = True
isFinal \_                 = False
\end{centercode}

Observe that for any closed expression, the CEK and CESK machines
operate in lock-step: each machine transitions, by the corresponding
rule, if and only if the other machine transitions.
\begin{lemma}[\citealp{mattmight:Felleisen:1987:Dissertation}]
\label{lem:store-equiv}
$\evf_{\mathit{CESK}}(e) \simeq \evf_{\mathit{CEK}}(e)$.
\end{lemma}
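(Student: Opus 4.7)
The plan is to establish a bisimulation between CEK states and CESK states, then lift it to reachability sets. The key device is an \emph{unfolding} operation that eliminates the store-indirection from a CESK configuration to produce a CEK configuration. Concretely, I would define a partial map $\unload$ from CESK states to CEK states by mutual recursion: on environments, $\unload(\env,\store)(\vv) = (\lamform{\vv'}{\expr}, \unload(\env',\store))$ whenever $\store(\env(\vv)) = (\lamform{\vv'}{\expr}, \env')$; on continuations, $\unload(\mtk,\store) = \mtk$, $\unload(\ark(\expr,\env,\cont),\store) = \ark(\expr,\unload(\env,\store),\unload(\cont,\store))$, and similarly for $\fnk$; on states, $\unload(\triple{\expr}{\env}{\store},\cont) = \triple{\expr}{\unload(\env,\store)}{\unload(\cont,\store)}$. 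Because the CESK machine of this section only ever allocates at fresh addresses (the side condition $\addr\notin\dom(\store)$), no cycles can be introduced in the store; the unfolding is therefore well-defined and total on reachable states, which can be shown by a simple induction on the number of transition steps.

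Next I would establish the lock-step correspondence asserted in the text. The forward direction: whenever $\state \longmapsto_{\mathit{CESK}} \state'$, then $\unload(\state) \longmapsto_{\mathit{CEK}} \unload(\state')$. This is a four-case analysis, one per transition rule, and each case is essentially immediate from the definition of $\unload$. The only case with any content is the function-application rule, where a fresh $\addr$ is allocated and $\store$ is extended; since $\addr\notin\dom(\store)$, the extension does not disturb the values of $\unload$ on any preexisting binding, so the resulting unfolded environment agrees with the CEK's environment-extension. The backward direction (simulation from CEK to CESK) uses the same case analysis, together with the observation that any CEK state $\state$ is the image under $\unload$ of some CESK state (in particular of $(\state,\mts)$ with the appropriate fresh allocations), modulo renaming of addresses.

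Finally, I would package these two simulations together with the identity $\unload(\inj_{\mathit{CESK}}(\expr)) = \inj_{\mathit{CEK}}(\expr)$, which holds by inspection since both initial stores and environments are empty. A straightforward induction on the length of a reduction sequence then gives: $\inj_{\mathit{CESK}}(\expr) \multistep_{\mathit{CESK}} \state$ iff $\inj_{\mathit{CEK}}(\expr) \multistep_{\mathit{CEK}} \unload(\state)$. Reading $\simeq$ as set equality modulo the unfolding $\unload$ (together with the address-renaming equivalence already quotiented in the state-space), this is exactly the statement $\evf_{\mathit{CESK}}(e) \simeq \evf_{\mathit{CEK}}(e)$.

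The main obstacle, and the only step that is not purely bookkeeping, is showing that $\unload$ is well-defined on every reachable CESK state, i.e., that the store remains acyclic. In this very restricted language (pure call-by-value $\lambda$-calculus with no \textbf{letrec} and no assignment), this is immediate from the fresh-allocation side condition; but it is worth stating the invariant explicitly, since the corresponding lemma would \emph{fail} for later extensions of the machine with mutation or recursive bindings, where the appropriate notion of equivalence has to be weakened or recast in terms of a coinductively defined relation rather than a unfolding function.
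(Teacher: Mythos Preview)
The paper does not actually prove this lemma in the text: it is attributed to Felleisen's 1987 dissertation and is preceded only by the one-sentence remark that ``the CEK and CESK machines operate in lock-step: each machine transitions, by the corresponding rule, if and only if the other machine transitions.'' Your proposal is exactly what it takes to make that remark precise---define an unfolding $\unload$ that collapses the store indirection, verify that $\unload$ commutes with each of the four transition rules, and conclude by induction on trace length. This is the standard argument and it is sound; in particular your observation that fresh allocation keeps the store acyclic (so $\unload$ terminates) is the right well-definedness invariant.

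One small wrinkle in your backward direction: the notation ``$(\state,\mts)$'' is ill-typed, since a CEK environment has codomain $\s{Val}\times\s{Env}$ rather than $\s{Addr}$, so you cannot simply pair a CEK state with an empty store to get a CESK state. But you do not need arbitrary preimages anyway. For the lock-step bisimulation it suffices to show that whenever $\unload(\state)$ takes a CEK step to some $\state''$, then $\state$ itself takes a CESK step to some $\state'$ with $\unload(\state')=\state''$; this follows from the same four-case analysis you already gave for the forward direction, using the fact that the CESK rule applicable to $\state$ is determined by the shape of $\unload(\state)$.
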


\subsection{A second attempt at abstract interpretation}
With the CESK machine, half the problem with the attempted na\"ive
abstract interpretation is solved: environments and closures are no
longer mutually recursive.
Unfortunately, continuations still have recursive structure.
We could crudely abstract a continuation into a set of frames, losing
all sense of order, but this would lead to a static analysis lacking
faculties to reason about return-flow: every call would appear to
return to every other call.
A better solution is to refactor continuations as we did environments,
redirecting the recursive structure through the store.
In the next section, we explore a CESK machine with a pointer
refinement for continuations.

\subsection{The CESK\texorpdfstring{$^\star$}{*} machine} 
\label{sec:ceskp}
To untie the recursive structure associated with continuations, we
shift to store-allocated continuations. 
The basic idea behind store-allocated continuations is not new.
SML/NJ has allocated continuations in the heap for well over a
decade~\cite{dvanhorn:Shao1994Spaceefficient}.
At first glance, modeling the program stack in an abstract machine
with store-allocated continuations would not seem to provide any real
benefit.
Indeed, for the purpose of defining the meaning of a program, there is
no benefit, because the meaning of the program does not depend on the
stack-implementation strategy.
Yet, a closer inspection finds that store-allocating continuations
eliminate recursion from the definition of the state-space of the
machine.
With no recursive structure in the state-space, an abstract machine
becomes eligible for conversion into an abstract interpreter through
a simple structural abstraction.

States of the \CESP{} machine, like the CESK, consist of an
expression, environment, store, and continuation; however, 
continuations are represented slightly differently.  Instead of
the inductive definition of continuations as
\[
\begin{grammar}
  \kappa &\in& \s{Cont} &\produces&
  \mtk \opor \ark(\expr,\env,\kappa) \opor \fnk(\den,\env,\kappa)\text,
\end{grammar}
\]
we insert a level of indirection by replacing the
continuation of a frame with \emph{a pointer to a continuation}:
\[
\begin{grammar}
  \kappa &\in& \s{Cont} &\produces&
  \mtk \opor \ark(\expr,\env,\addr) \opor \fnk(\den,\env,\addr)\text.
\end{grammar}
\]
This change requires the store to follow suit by mapping addresses to
denotable values or continuations:
\[
\begin{grammar}
  \sto &\in& \Storable &=& \Den \times \s{Env} + \s{Cont}.
\end{grammar}
\]
All together, the new state-space in Haskell becomes:
\begin{centercode}
type \(\Sigma\) = (Exp,Env,Store,Kont)
data Kont = Mt | Ar(Exp,Env,Addr) | Fn(Lambda,Env,Addr)
data Storable = Clo(Lambda, Env) | Cont Kont
type Env = Var :-> Addr
type Store = Addr :-> Storable
type Addr = Int
\end{centercode}

The revised machine is defined as
\begin{machine}{CESK^\star}
\mrule{\langle\vv, \env, \store, \kappa\rangle}
      {\langle\den,\env', \store, \kappa\rangle
        \mbox{ where }(\den,\env') = \store(\env(\vv))}
\mnext
\mrule{\langle\appform{\expr_0}{\expr_1}, \env, \store, \kappa\rangle}
      {\langle\expr_0, \env, \store[\addr\mapsto \kappa], \ark(\expr_1, \env, \addr)\rangle\mbox{ where }\addr\not\in\dom(\store)}
\mnext
\mrule{\langle\den,\env,\store, \ark(\expr,\env',\addr) \rangle}
      {\langle\expr,\env',\store,\fnk(\den,\env,\addr)\rangle}
\mnext
\mrule{\langle\den,\env,\store, \fnk(\lamform{\vv}{\expr},\env',\addrnext) \rangle}
{\langle\expr,\env'[\vv \mapsto \addr], \store[\addr \mapsto (\den,\env)], \kappa\rangle}
\\
&&\mbox{ where }\addr\not\in\dom(\store)\mbox{ and }\kappa = \store(\addrnext)
\end{machine}%
and the initial machine state is defined just as before:
\[
\inj_{\mathit{CESK^\star}}(\expr) = \inj_{\mathit{CESK}}(\expr) =  \langle\expr,\mte,\mts,\mtk\rangle\text.
\]
In Haskell:
\begin{centercode}
step :: \(\Sigma\) -> \(\Sigma\)
step (Ref x, \(\env\), \(\sigma\), \(\kappa\)) = (Lam lam, \(\env\)', \(\sigma\), \(\kappa\))  
 where Clo(lam, \(\env\)') = \(\sigma\)!(\(\env\)!x)
step (f :@ e, \(\env\), \(\sigma\), \(\kappa\)) = (f, \(\env\), \(\sigma\)', \(\kappa\)')
 where a' = alloc(\(\sigma\))
       \(\sigma\)' = \(\sigma\) // [a' ==> Cont \(\kappa\)]
       \(\kappa\)' = Ar(e, \(\env\), a')
step (Lam lam, \(\env\), \(\sigma\), Ar(e, \(\env\)', a')) = (e, \(\env\)', \(\sigma\), Fn(lam, \(\env\), a')) 
step (Lam lam, \(\env\), \(\sigma\), Fn(x :=> e, \(\env\)', a)) =
     (e, \(\env\)' // [x ==> a'], \(\sigma\) // [a' ==> Clo(lam, \(\env\))], \(\kappa\)) 
 where Cont \(\kappa\) = \(\sigma\)!a
       a' = alloc(\(\sigma\))
\end{centercode}

The allocation function needs only to return an unused address:
\begin{centercode}
alloc :: Store -> Addr
alloc(\(\sigma\)) = (foldl max 0 \$! keys \(\sigma\)) + 1
\end{centercode}

The evaluation function (not shown) is defined along the same lines as
those for the CEK (Section~\ref{sec:cek}) and CESK
(Section~\ref{sec:cesk}) machines.
Like the CESK machine, it is easy to relate the CESK$^\star$ machine
to its predecessor; from corresponding initial configurations, these
machines operate in lock-step:
\begin{lemma}
\label{lem:pointer-equiv}
$\evf_{\mathit{CESK^\star}}(e) \simeq
\evf_{\mathit{CESK}}(e)$.
\end{lemma}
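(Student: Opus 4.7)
The plan is to exhibit a lock-step bisimulation between the CESK and \CESP{} machines. The relation will identify any \CESP{} state with the CESK state obtained by ``unrolling'' the store-allocated continuation chain back into an inductive continuation, exactly mirroring the structural difference between the two state-spaces.

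First I would define an unloading function $\unload$ from \CESP{} continuations (together with a \CESP{} store) to CESK continuations, by $\unload(\mtk, \store^\star) = \mtk$, $\unload(\ark(\expr, \env, \addr), \store^\star) = \ark(\expr, \env, \unload(\cont, \store^\star))$ with $\store^\star(\addr) = \cont$, and symmetrically for $\fnk$. For this to be well-defined I need the invariant that the pointer chain starting from any reachable \CESP{} continuation is finite and acyclic; this follows because the only transition that stores a continuation, the application rule, always uses an address fresh with respect to the current store, so the newly written cell points only into the older (well-founded) portion of the store.

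Next I would define the bisimulation $\sim$ componentwise: expressions and environments agree verbatim, the CESK store matches the \CESP{} store restricted to addresses holding denotable values, and the CESK continuation is exactly $\unload(\cont^\star, \store^\star)$. The injections match, since both produce $\langle \expr, \mte, \mts, \mtk\rangle$, and hence $\inj_{\mathit{CESK}}(\expr) \sim \inj_{\mathit{CESK^\star}}(\expr)$.

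The heart of the argument is a case analysis on the four transition rules, proving simulation in both directions. Variable lookup and the $\ark\!\to\!\fnk$ shift preserve the continuation structure verbatim, so the $\sim$ invariant is trivial there. The two interesting cases are the application rule $\appform{\expr_0}{\expr_1}$, where \CESP{} allocates a fresh address $\addr$ and stores the current continuation at $\store^\star[\addr \mapsto \cont]$ while CESK simply nests, and the $\betavalue$ rule, where \CESP{} recovers the surrounding continuation by dereferencing the pointer in the $\fnk$ frame. Both match the CESK transitions once we observe that $\unload$ commutes with extension of the store at a fresh address: $\unload(\ark(\expr_1,\env,\addr),\, \store^\star[\addr \mapsto \cont]) = \ark(\expr_1,\env,\unload(\cont,\store^\star))$, and dually for the $\beta_v$ step. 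The main obstacle is the freshness bookkeeping; since states are identified up to consistent renaming of addresses, I can always choose the \CESP{} continuation address disjoint from the addresses used for value bindings, so the restricted stores remain aligned and the unloading is stable. With the one-step simulation in hand, a straightforward induction on trace length shows that $\inj(\expr) \multistep_{\mathit{CESK^\star}} \state^\star$ iff there exists $\state$ with $\inj(\expr) \multistep_{\mathit{CESK}} \state$ and $\state \sim \state^\star$, which is exactly the content of $\evf_{\mathit{CESK^\star}}(\expr) \simeq \evf_{\mathit{CESK}}(\expr)$.
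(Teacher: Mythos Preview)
Your proposal is correct. The paper does not actually supply a proof of this lemma; it merely asserts that ``from corresponding initial configurations, these machines operate in lock-step,'' and your unloading-based bisimulation is precisely the standard way to make that assertion rigorous. The well-foundedness argument for $\unload$ via freshness of continuation addresses, the componentwise relation with the value-restricted store, and the four-rule case analysis are all sound, and the appeal to identification up to address renaming to keep the binding and continuation addresses disjoint is exactly how the paper intends such bookkeeping to be discharged.
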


\subsection{Addresses, abstraction and allocation}

The CESK$^\star$ machine nondeterministically chooses addresses when
it allocates a location in the store, but because machines are
identified up to consistent renaming of addresses, the transition
system remains deterministic.

Looking ahead, an easy way to bound the state-space of this machine is
to bound the set of addresses.\footnote{A finite number of addresses
  leads to a finite number of environments, which leads to a finite
  number of closures and continuations, which in turn, leads to a
  finite number of stores, and finally, a finite number of states.}
But once the store is finite, locations may need to be reused and when
multiple values are to reside in the same location; the store will
have to soundly approximate this by \emph{joining} the values.

In our concrete machine, all that matters about an allocation strategy
is that it picks an unused address.  In the abstracted machine
however, the strategy \emph{may have to re-use previously allocated
  addresses}.  The abstract allocation strategy is therefore crucial
to the design of the analysis---it indicates when finite resources
should be doled out and decides when information should deliberately
be lost in the service of computing within bounded resources.  In
essence, the allocation strategy is the heart of an analysis.
Allocation strategies corresponding to well-known analyses are given in
Section~\ref{sec:labelled-acesk}.

For this reason, concrete allocation deserves a bit more attention in
the machine.  An old idea in program analysis is that dynamically
allocated storage can be represented by the state of the computation
at allocation time~\cite[Section 1.2.2]{dvanhorn:Jones1982Flexible,dvanhorn:Midtgaard2011Controlflow}.  That is, allocation
strategies can be based on a (representation) of the machine history.
These representations are often called \emph{time-stamps}.

A common choice for a time-stamp, popularized by
\citet{mattmight:Shivers:1991:CFA}, is to represent the history
of the computation as \emph{contours}, finite strings encoding the
calling context.
We present a concrete machine that uses general time-stamp approach
and is parameterized by a choice of $\tick$ and $\alloc$ functions.
We then instantiate $\tick$ and $\alloc$ to obtain an 
abstract machine for computing a $k$-CFA-style analysis using the
contour approach.


\subsection{The time-stamped CESK\texorpdfstring{$^\star$}{*} machine}
\label{sec:secpt}

The machine states of the time-stamped \CESP{} machine include a
\emph{time} component, which is intentionally left unspecified for the moment:
\[
\begin{array}{r@{\;}c@{\;}l}
  \tm,\tmnext &\in &\s{Time}
  \\
  \state &\in& \State =
  \syn{Exp} \times \s{Env} \times \s{Store} \times \s{Addr} \times \s{Time}  
  \text.
\end{array}
\]
In Haskell, we fix times and addresses as integers for the moment:
\begin{centercode}
type \(\Sigma\) = (Exp,Env,Store,Kont,Time)
data Storable = Clo (Lambda, Env) | Cont Kont
type Env = Var :-> Addr
type Store = Addr :-> Storable
data Kont = Mt | Ar (Exp,Env,Addr) | Fn (Lambda,Env,Addr)
type Addr = Int
type Time = Int
\end{centercode}
The machine is parameterized by the functions:
\begin{align*}
\tick &: \State \rightarrow \s{Time}
&
\alloc &: \State \to \s{Addr}
\text.
\end{align*}
The $\tick$ function returns the next time; the $\alloc$ function
allocates a fresh address for a binding or continuation.
We require of $\tick$ and $\alloc$ that for all $\state = \langle
\_,\_,\store,\_,\tm\rangle$, $\tm \sqsubset \tick(\state)$ and
$\alloc(\state) \notin \sigma$.
In Haskell, these functions find the next available integer:
\begin{centercode}
alloc :: \(\Sigma\) -> Addr
alloc(\_,\_,\(\store\),\_,\_) = (foldl max 0 \$! keys \(\store\)) + 1

tick :: \(\Sigma\) -> Time
tick (\_,\_,\_,\_,t) = t + 1
\end{centercode}

The time-stamped \CESP{} machine transition relation, $\state
\longmapsto_{CESK^\star} \state'$, is defined as:
\begin{machine}{CESK^\star_t}
\mrule{\langle\vv, \env, \store, \kappa,t\rangle}
      {\langle\den,\env', \store, \kappa,u\rangle
        \mbox{ where }(\den,\env') = \store(\env(\vv))}
\mnext
\mrule{\langle\appform{\expr_0}{\expr_1}, \env, \store, \kappa,t\rangle}
      {\langle\expr_0, \env, \store[\addr\mapsto \kappa], \ark(\expr_1, \env, \addr),u\rangle}
\mnext
\mrule{\langle\den,\env,\store, \ark(\expr,\env',\addrnextnext),t \rangle}
      {\langle\expr,\env',\store,\fnk(\den,\env,\addrnextnext),u\rangle}
\mnext
\mrule{\langle\den,\env,\store, \fnk(\lamform{\vv}{\expr},\env',\addrnextnext),t \rangle}
{\langle\expr,\env'[\vv \mapsto \addr], \store[\addr \mapsto (\den,\env)], \kappa,u\rangle\mbox{ where }\kappa = \store(\addrnextnext)}
\end{machine}%
where $\addr=\alloc(\state)$ and $\tmnext=\tick(\state)$. 
Or, in Haskell:
\begin{centercode}
step :: \(\Sigma\) -> \(\Sigma\)
step \(\varsigma\)@(Ref x, \(\env\), \(\sigma\), \(\kappa\), t) = (Lam lam, \(\env\)', \(\sigma\), \(\kappa\), t')  
 where Clo(lam, \(\env\)') = \(\sigma\)!(\(\env\)!x)
       t' = tick(\(\varsigma\))

step \(\varsigma\)@(f :@ e, \(\env\), \(\sigma\), \(\kappa\), t) = (f, \(\env\), \(\sigma\)', \(\kappa\)', t')
 where a' = alloc(\(\varsigma\))
       \(\sigma\)' = \(\sigma\) // [a' ==> Cont \(\kappa\)]
       \(\kappa\)' = Ar(e, \(\env\), a')
       t' = tick(\(\varsigma\))

step \(\varsigma\)@(Lam lam, \(\env\), \(\sigma\), Ar(e, \(\env\)', a'), t) 
     = (e, \(\env\)', \(\sigma\), Fn(lam, \(\env\), a'), t') 
 where t' = tick(\(\varsigma\)) 

step \(\varsigma\)@(Lam lam, \(\env\), \(\sigma\), Fn(x :=> e, \(\env\)', a), t) 
     = (e, \(\env\)' // [x ==> a'], \(\sigma\) // [a' ==> Clo(lam, \(\env\))], \(\kappa\), t') 
 where Cont \(\kappa\) = \(\sigma\)!a
       a' = alloc(\(\varsigma\))
       t' = tick(\(\varsigma\))
\end{centercode}

A program is injected into the initial machine state as:
\[
\inj_{\mathit{CESK^\star_\tm}}(\expr) = \langle\expr,\mte,\mts,\mtk,\tm_0\rangle\text.
\]

Satisfying definitions for the parameters are:
\begin{gather*}
\s{Time} = \s{Addr} = \mathbb{Z}
\\
\begin{align*}
\addr_0 = \tm_0 &= 0
&
\tick\langle \_, \_, \_, \_, \tm\rangle &= \tm+1
&
\alloc \langle \_, \_, \_, \_, \tm\rangle &= \tm
\text.
\end{align*}
\end{gather*}
Under these definitions, the time-stamped CESK$^\star$ machine
operates in lock-step with the CESK$^\star$ machine, and therefore
with the CESK and CEK machines as well.
\begin{lemma}
\label{lem:time-stamp-equiv}
$\evf_{\mathit{CESK^\star_\tm}}(e) \simeq
\evf_{\mathit{CESK^\star}}(e)$.
\end{lemma}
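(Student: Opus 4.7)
The plan is to exhibit a lock-step bisimulation between the two machines by forgetting the time component and verifying that the time-based allocator satisfies the freshness condition imposed on the \CESP{} machine. Since states of \CESP{} are identified up to consistent renaming of addresses, any choice of fresh address made by the time-stamped machine is a legitimate choice for its predecessor, and vice versa.

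Concretely, I would define the relation $R$ between $\mathit{CESK^\star_\tm}$-states and $\mathit{CESK^\star}$-states by $\langle \expr, \env, \store, \kappa, \tm\rangle \mathrel{R} \langle \expr, \env, \store, \kappa\rangle$ (i.e., $R$ simply projects away the time component). Clearly $\inj_{\mathit{CESK^\star_\tm}}(\expr) \mathrel{R} \inj_{\mathit{CESK^\star}}(\expr)$, so it suffices to show that $R$ is preserved by the transition relations in both directions. For the variable lookup, $\ark$-to-$\fnk$, and function-return rules, the transitions coincide up to the time component, so preservation is immediate. The only interesting cases are the application and function-call rules, where a fresh address is allocated.

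For the application rule, the time-stamped machine picks $\addr = \alloc(\state) = \tm$, while \CESP{} requires only $\addr \not\in \dom(\store)$. The key invariant to establish is:
\begin{equation*}
  \text{for every reachable state } \langle \_, \_, \store, \_, \tm\rangle, \text{ every } \addr \in \dom(\store) \text{ satisfies } \addr < \tm.
\end{equation*}
This holds initially (the store is empty) and is preserved by each transition: the only address added to the store in a step from $\state = \langle\_,\_,\store,\_,\tm\rangle$ is $\alloc(\state) = \tm$, and the successor state carries time $\tick(\state) = \tm + 1 > \tm$. Thus $\tm \not\in \dom(\store)$, so the time-stamped machine's choice is a valid fresh address for \CESP{}. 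The function-call rule is analogous.

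The only subtle point---and what I'd flag as the main obstacle---is the address renaming equivalence. Since \CESP{} chooses addresses nondeterministically but quotients states by $\alpha$-renaming of addresses, one must argue that the deterministic integer choice made by $\alloc(\state) = \tm$ corresponds, under some bijection on \Addr{}, to any particular fresh address chosen by \CESP{}. This amounts to showing that if $\addr, \addr'$ are both fresh for $\store$, then the successor states obtained by using $\addr$ versus $\addr'$ are equal up to consistent renaming; this is a straightforward structural induction on the components of the state (environment, store, continuation), invoked once per allocation step. Having verified the invariant and this renaming observation, lock-step simulation in both directions follows by a routine induction on transition length, yielding $\evf_{\mathit{CESK^\star_\tm}}(\expr) \simeq \evf_{\mathit{CESK^\star}}(\expr)$.
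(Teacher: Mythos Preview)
Your proposal is correct and matches the paper's approach: the paper does not give a formal proof of this lemma at all, merely asserting in the sentence preceding it that ``under these definitions, the time-stamped CESK$^\star$ machine operates in lock-step with the CESK$^\star$ machine.'' Your argument supplies precisely the details the paper elides---the projection relation, the freshness invariant $\addr \in \dom(\store) \Rightarrow \addr < \tm$, and the handling of address $\alpha$-equivalence---and is the natural elaboration of that one-line justification.
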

\noindent
The time-stamped CESK$^\star$ machine forms the basis of our
abstracted machine in the following section.

\subsection{The abstract time-stamped CESK\texorpdfstring{$^\star$}{*} machine}
\label{sec:acespt}

As alluded to earlier, with the time-stamped CESK$^\star$ machine, we
now have a machine ready for direct abstract interpretation via a
single point of approximation: the store.
Our goal is a machine that resembles the
time-stamped CESK$^\star$ machine, but operates over a finite
state-space and it is allowed to be nondeterministic.
Once the state-space is finite, the transitive
closure of the transition relation becomes computable, and this
transitive closure constitutes a static analysis.
Buried in a path through the transitive closure is a (possibly
infinite) traversal that corresponds to the concrete execution of the
program.

The abstracted variant of the time-stamped CESK$^\star$ machine comes from
bounding the address space of the store and the number of times
available.
By bounding these sets, the state-space becomes
finite,\footnote{Syntactic sets like $\syn{Exp}$ are infinite, but
  finite for any given program.}  but for the purposes of soundness,
an entry in the store may be forced to hold several values
simultaneously:
\[
\begin{grammar}
  \astore &\in &\sa{Store} &=& \s{Addr} \parto \Pow{\Storable}
  \text.
\end{grammar}
\]
Hence, stores now map an address to a \emph{set} of
storable values rather than a single value.  These
collections of values model approximation in the analysis.  If a
location in the store is re-used, the new value is joined with the
current set of values.  When a location is dereferenced, the analysis
must consider any of the values in the set as a result of the
dereference.
In Haskell, the new state-space is nearly the same:
\begin{centercode}
type \(\State\) = (Exp,Env,Store,Kont,Time)
data Storable = Clo(Lambda, Env) | Cont Kont
type Env = Var :-> Addr
type Store = Addr :-> \(\mathbb{P}\)(Storable)
data Kont = Mt | Ar(Exp,Env,Addr) | Fn(Lambda,Env,Addr)
type Time = -- some finite set
type Addr = -- some finite set
\end{centercode}
  where $\mathbb{P}$ is a type synonym for \texttt{Data.Set.Set}:
\begin{centercode}
type \(\mathbb{P}\) s = Data.Set.Set s
\end{centercode}

The nondeterministic abstract transition relation changes little compared with
the concrete machine.
We only have to modify it to account for the possibility that multiple
storable values (which includes continuations) may reside together in
the store, which we handle by letting the machine
nondeterministically choose a particular value from the set at a
given store location.

The abstract time-stamped CESK$^\star$ machine is defined as:
\begin{machine}
{\widehat{\mathit{CESK}^\star_\tm}} 
\mrule{\langle\vv, \env,\astore, \kappa,t\rangle} {\langle\den,\env', \astore, \kappa,u\rangle \mbox{
  where }(\den,\env') \in \astore(\env(\vv))} 
  \mnext
  \mrule{\langle\appform{\expr_0}{\expr_1}, \env, \astore, \kappa,t\rangle}
  {\langle\expr_0, \env, \astore\join[\addr\mapsto \set{\kappa}], \ark(\expr_1,
  \env, \addr),u\rangle} 
  \mnext 
  \mrule{\langle\den,\env,\astore,
  \ark(\expr,\env',\addrnextnext),t \rangle}
  {\langle\expr,\env',\astore,\fnk(\den,\env,\addrnextnext),u\rangle}
  \mnext
  \mrule{\langle\den,\env,\astore,
  \fnk(\lamform{\vv}{\expr},\env',\addrnextnext),t \rangle}
  {\langle\expr,\env'[\vv \mapsto \addr], \astore\join[\addr \mapsto
  \set{(\den,\env)}], \kappa,u\rangle\mbox{ where }\kappa \in
  \astore(\addrnextnext)} 
\end{machine}%
where $\addr=\aalloc(\astate)$ and
  $\tmnext=\atick(\astate)$. 
To make sense of the join operator $\join$, we assume the natural lifting of
a partial order over sets and maps.

Haskell requires that we be explicit about the ``natural'' lifting.
Fortunately, we can specify the natural lifting through type classes.
First, we define a class for lattices, 
sets partially ordered by a relation $\wt$,
and for which any two elements have both a least upper bound ($\join$)
and a greatest lower bound ($\meet$):
\begin{centercode}
class Lattice a where
 bot :: a
 top :: a
 (\(\wt\)) :: a -> a -> Bool
 (\(\join\)) :: a -> a -> a
 (\(\meet\)) :: a -> a -> a
\end{centercode}
Then, we assert that for a flat set $X$ ordered by equality, the set $\Pow{X}$ is a lattice
ordered by set inclusion:
\begin{centercode}
instance (Ord s, Eq s) => Lattice (\(\mathbb{P}\) s) where
 bot = Data.Set.empty
 top = error "no representation of universal set"
 x \(\sqcup\) y = x `Data.Set.union` y
 x \(\sqcap\) y = x `Data.Set.intersection` y
 x \(\wt\) y = x `Data.Set.isSubsetOf` y
\end{centercode}
This allows us to treat sets of $\Storable$ objects as a lattice.
Next, we lift maps into lattices point-wise into lattices:
\begin{centercode}
instance (Ord k, Lattice v) => Lattice (k :-> v) where
 bot = Data.Map.empty
 top = error "no representation of top map"
 f \(\wt\) g = Data.Map.isSubmapOfBy (\(\wt\)) f g
 f \(\join\) g = Data.Map.unionWith (\(\join\)) f g
 f \(\meet\) g = Data.Map.intersectionWith (\(\meet\)) f g
\end{centercode}
To provide the illusion of infinite maps, we also define a new 
look-up operator that returns the bottom element of the range by default:
\begin{centercode}
(!!) :: (Ord k, Lattice v) => (k :-> v) -> k -> v
f !! k = Data.Map.findWithDefault bot k f
\end{centercode}
At this point, abstract stores are now lattices with a sensibly defined 
join operation $\join$:
\begin{equation*}
 \astore_1 \join \astore_2 = \lambda a . \astore_1(a) \join \astore_2(a)
 \text.
\end{equation*}
To render the transition relation in code requires lifting
the range of the \texttt{step} function to a sequence, since the abstract relation 
is truly nondeterministic:
\begin{centercode}
step :: \(\Sigma\) -> [\(\Sigma\)]
step \(\varsigma\)@(Ref x, \(\env\), \(\sigma\), \(\kappa\), t) = [ (Lam lam, \(\env\)', \(\sigma\), \(\kappa\), t') 
  | Clo(lam, \(\env\)') <- Data.Set.toList \$! \(\sigma\)!!(\(\env\)!x) ]
  where t' = tick(\(\varsigma\))

step \(\varsigma\)@(f :@ e, \(\env\), \(\sigma\), \(\kappa\), t) = [ (f, \(\env\), \(\sigma\)', \(\kappa\)', t') ] 
 where a' = alloc(\(\varsigma\))
       \(\sigma\)' = \(\sigma\) \(\bigsqcup\) [a' ==> s(Cont \(\kappa\))] 
       \(\kappa\)' = Ar(e, \(\env\), a')
       t' = tick(\(\varsigma\))

step \(\varsigma\)@(Lam lam, \(\env\), \(\sigma\), Ar(e, \(\env\)', a'), t) 
     = [ (e, \(\env\)', \(\sigma\), Fn(lam, \(\env\), a'), t') ]
 where t' = tick(\(\varsigma\)) 

step \(\varsigma\)@(Lam lam, \(\env\), \(\sigma\), Fn(x :=> e, \(\env\)', a), t) 
     = [ (e, \(\env\)' // [x ==> a'], \(\sigma\) \(\bigsqcup\) [a' ==> s(Clo(lam, \(\env\)))], \(\kappa\), t') 
       | Cont \(\kappa\) <- Data.Set.toList \$! \(\sigma\)!!a ]
 where t' = tick(\(\varsigma\))
       a' = alloc(\(\varsigma\))
\end{centercode}
For convenience, we overrode the big join operator $\bigsqcup$ to 
serve as a special operator for merging a few entries into a large map lattice:
\begin{centercode}
(\(\bigsqcup\)) :: (Ord k, Lattice v) => (k :-> v) -> [(k,v)] -> (k :-> v)
f \(\bigsqcup\) [(k,v)] = Data.Map.insertWith (\(\join\)) k v f
\end{centercode}
and made $\texttt{s}$ a synonym for $\texttt{singleton}$:
\begin{centercode}
s x = Data.Set.singleton x
\end{centercode}

A program is injected into the initial abstract machine state
just as before:
\[
\inj_{\widehat{\mathit{CESK^\star_\tm}}}(\expr) = \inj_{\mathit{CESK^\star_\tm}}(\expr) = \langle\expr,\mte,\mts,\mtk,\tm_0\rangle\text.
\]

The analysis is parameterized by abstract variants of the functions
that parameterized the concrete version:
\begin{align*}
\atick &: \aState \to \s{Time}\text,
&
\aalloc &: \aState \to \s{Addr}
\text.
\end{align*}
In the concrete, these parameters determine allocation and
stack behavior.
In the abstract, they are the arbiters of precision: they determine
when an address gets re-allocated, how many addresses get allocated,
and which values have to share addresses.

Recall that in the concrete semantics, these functions consume
states---not states and continuations as they do here.
This is because in the concrete, a state alone suffices since the
state determines the continuation.
But in the abstract, a continuation pointer within a state may denote a
multitude of continuations; however the transition relation is
defined with respect to the choice of a particular one.
We thus pair states with continuations to encode the choice.

The \emph{abstract} semantics computes the set of reachable states:
\[
\avf_{\widehat{\mathit{CESK}^\star_\tm}}(\expr) = \{ \astate\ |\ \inj_{\widehat{\mathit{CESK^\star_\tm}}}(\expr) \multistep_{\widehat{\mathit{CESK}^\star_\tm}} \astate \}\text.
\]
In Haskell, computing the analysis is (naively) just a graph exploration:
\begin{centercode}
aval :: Exp -> \(\mathbb{P}\)(\(\Sigma\))
aval(e) = explore step (inject(e))

explore :: (Ord a) => (a -> [a]) -> a -> \(\mathbb{P}\)(a)
explore f \(\state\)0 = search f Data.Set.empty [\(\state0\)]

(\(\in\)) :: Ord a => a -> \(\mathbb{P}\)(a) -> Bool
(\(\in\)) = Data.Set.member

search :: (Ord a) => (a -> [a]) -> \(\mathbb{P}\)(a) -> [a] -> \(\mathbb{P}\)(a)
search f seen [] = seen
search f seen (hd:tl)
 | hd \(\in\) seen = search f seen tl
 | otherwise = search f (Data.Set.insert hd seen) (f(hd) ++ tl)
\end{centercode}

\subsection{Soundness and computability}
\label{sec:cesk-soundness}

The finiteness of the abstract state-space ensures decidability.

\begin{theorem}[Decidability of the Abstract CESK$^\star$ Machine]\ \\
$\astate \in \avf_{\widehat{\mathit{CESK}^\star_\tm}}(\expr)$ is decidable.
\end{theorem}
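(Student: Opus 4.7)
The plan is to reduce decidability to two facts: the abstract state space $\aState$ restricted to states reachable from $\inj(\expr)$ is finite, and the one-step abstract transition relation is computable. Together these give a straightforward worklist computation of $\avf_{\widehat{\mathit{CESK}^\star_\tm}}(\expr)$ as a finite set, after which membership is a finite lookup.

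For finiteness of the state space, I would work component-by-component through the definition $\aState = \syn{Exp} \times \s{Env} \times \sa{Store} \times \s{Cont} \times \s{Time}$. Although $\syn{Exp}$ is infinite in general, every control string reachable from $\inj(\expr)$ is a subexpression of the fixed program $\expr$, of which there are finitely many; similarly only finitely many variables occur. By the defining assumption of the abstract machine, $\s{Addr}$ and $\s{Time}$ are both finite. Hence $\s{Env} = \syn{Var} \parto \s{Addr}$ is a finite set of finite partial maps; $\s{Cont}$ is built from finitely many frames over finite $\syn{Exp}$, $\s{Env}$, and $\s{Addr}$, so it is finite; $\Storable = \syn{Lam} \times \s{Env} + \s{Cont}$ is finite; and $\sa{Store} = \s{Addr} \parto \PowSm{\Storable}$ is a finite set of maps into a finite powerset. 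A finite product of finite sets is finite, so $\aState$ is finite.

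For computability, each clause of $\multistep_{\widehat{\mathit{CESK}^\star_\tm}}$ produces successors by pattern matching on the control string and continuation and, in the variable-lookup and function-call clauses, by ranging over the finite set $\astore(\env(\vv))$ or $\astore(\addrnextnext)$ of storable values. Thus the successor function $\astate \mapsto \{\astate' : \astate \multistep \astate'\}$ is total and computable. Starting from $\inj(\expr)$, a standard worklist closure (exactly the \texttt{explore}/\texttt{search} routines sketched above) enumerates all reachable abstract states; because only finitely many distinct states can ever be added to the visited set, the loop terminates. Membership of a given $\astate$ is then decided by comparison against this finite set.

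The only subtlety, and the point where one must be careful, is justifying that the reachable portion of the state space really is finite without appealing to the whole (syntactically infinite) $\syn{Exp}$: one must observe that $\atick$ and $\aalloc$, together with the transition rules, produce only addresses, times, expressions, and environments drawn from finite subsets determined by $\expr$. This is a design obligation on any concrete instantiation of $\atick$ and $\aalloc$, to be discharged in Section~\ref{sec:labelled-acesk}; granted that obligation, the rest is a routine accumulation of finiteness and computability closure properties.
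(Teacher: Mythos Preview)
Your proof is correct and follows essentially the same approach as the paper: the abstract state-space is finite because its definition is non-recursive with finite sets (including $\s{Addr}$ and $\s{Time}$, by assumption) at the leaves, so reachability is decidable by exhaustive exploration. Your treatment is simply more detailed than the paper's three-line argument; the only unnecessary hedge is your final paragraph's worry about $\atick$ and $\aalloc$, since in the abstract machine $\s{Addr}$ and $\s{Time}$ are \emph{defined} to be finite sets, so any instantiation of those parameters is automatically constrained to range over them.
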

\begin{proof}
  The state-space of the machine is non-recursive with finite sets at the leaves
  on the assumption that addresses are finite.  Hence 
  reachability is decidable since the abstract state-space is finite.
\end{proof}

We have endeavored to evolve the abstract machine gradually so that
its fidelity in soundly simulating the original CEK machine is both
intuitive and obvious.
But to formally establish soundness of the abstract time-stamped
\CESP{} machine, we use an abstraction function, defined below, from
the state-space of the concrete time-stamped machine into the
abstracted state-space.
\begin{align*}
\absmap &:
  \State_{\mathit{CESK}^\star_\tm} \rightarrow
  \aState_{\widehat{\mathit{CESK}^\star_\tm}}
\\
  \absmap(\expr,\env,\store,\addr,\tm) &= (e,\absmap(\env),\absmap(\store),\absmap(\kappa),\absmap(\tm)) 
    && \text{[states]}
  \\
  \absmap(\env) &= \lambda \vv . \absmap(\env(\vv)) && \text{[environments]}
\\
  \absmap(\store) &= \lambda \aaddr . \!\!\!\! \bigjoin_{\absmap(\addr) = \aaddr} \!\!\!\! \set{\absmap(\store(\addr))} && \text{[stores]}  
\\
  \absmap(\lamform{\vv}{\expr},\env) &= (\lamform{\vv}{\expr},\absmap(\env)) && \text{[closures]}
  \\
  \absmap(\mtk) &= \mtk && \text{[continuations]}
  \\
  \absmap(\ark(\expr,\env,\addr)) &= \ark(\expr,\absmap(\env),\absmap(\addr))
  \\
  \absmap(\fnk(\den,\env,\addr)) &= \fnk(\den,\absmap(\env),\absmap(\addr))
  \text,
\end{align*}

%

The abstraction map over times and addresses is defined so that
the parameters $\aalloc$ and $\atick$ are
sound simulations of the parameters $\alloc$
and $\tick$, respectively.
We also define the partial order $(\wt)$ on the abstract state-space
as the natural point-wise, element-wise, component-wise and
member-wise lifting, wherein the partial orders on the sets
$\syn{Exp}$ and $\s{Addr}$ are flat.
Then, we can prove that abstract machine's transition relation
simulates the concrete machine's transition relation.
\begin{theorem}[Soundness of the Abstract CESK$^\star$ Machine]\ \\   
\label{thm:soundness}
  If $\state \longmapsto_{\mathit{CEK}} \state'$ and
  $\alpha(\state) \wt \astate$, then there exists an abstract state
  $\astate'$, such that $\astate
  \longmapsto_{\widehat{\mathit{CESK}}^\star_\tm} \astate'$ and
  $\alpha(\state') \wt \astate'$.
\end{theorem}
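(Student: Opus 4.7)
The plan is to proceed by case analysis on the concrete transition rule used to derive $\state \longmapsto \state'$. There are four rules in the time-stamped $\CESP{}$ machine---variable reference, application, arrival at an $\ark$ frame, and arrival at a $\fnk$ frame---and for each I would show that the corresponding abstract rule fires at $\astate$, producing a witness $\astate'$ with $\absmap(\state') \wt \astate'$. The argument hinges on two structural facts that I would establish as short lemmas up front: (i) $\absmap$ is monotone and commutes with the structural operations (environment extension, store update, continuation construction), and (ii) the parameters $\aalloc$ and $\atick$ soundly simulate $\alloc$ and $\tick$, i.e.\ $\absmap(\alloc(\state)) = \aalloc(\astate)$ whenever $\absmap(\state)\wt\astate$, and analogously for $\tick$. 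Fact (ii) is assumed as the defining property of a valid parameterization, so its invocation is purely bookkeeping.

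For each case, I would first unpack $\absmap(\state)$ using the definitions of $\absmap$ on states and stores, then use $\absmap(\state) \wt \astate$ to expose the structure of $\astate$---same control expression, an environment $\aenv$ with $\absmap(\env) \wt \aenv$, a store $\astore$ with $\absmap(\store) \wt \astore$, a matching continuation frame, and a time $\atm$ with $\absmap(\tm)\wt\atm$. In the application and continuation-arrival cases, the abstract machine is deterministic up to the parameters, so I just choose the abstract target produced by the rule and check the inclusion $\absmap(\state')\wt\astate'$ component by component; the store side condition reduces to the fact that, because $\astore\join[\aaddr\mapsto\{\ldots\}]$ joins rather than overwrites, $\absmap(\store[\addr\mapsto v]) \wt \absmap(\store) \join [\absmap(\addr) \mapsto \{\absmap(v)\}] \wt \astore \join [\aaddr \mapsto \{\absmap(v)\}]$.

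The most delicate cases are the variable reference and the $\fnk$ rule, where the abstract machine is genuinely nondeterministic because the store now holds sets. For variable reference, the concrete store yields a single closure $(\den,\env') = \store(\env(\vv))$; by definition of $\absmap$ on stores we have $\absmap((\den,\env')) \in \absmap(\store)(\absmap(\env(\vv))) \wt \astore(\aenv(\vv))$, so the nondeterministic choice in the abstract rule can select a particular $(\den,\aenv'')$ with $\absmap((\den,\env')) \wt (\den,\aenv'')$, yielding an abstract successor whose environment component dominates $\absmap(\env')$. The $\fnk$ case is analogous: the concrete continuation $\kappa = \store(\addrnextnext)$ has $\absmap(\kappa)$ in the abstract set $\astore(\absmap(\addrnextnext))$, so a matching $\acont$ exists for the abstract choice.

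The main obstacle is discharging the simulation obligation cleanly in the two nondeterministic cases, because we must exhibit the correct abstract witness from a set whose members only approximately match the concrete datum, and then verify that extending the abstract store with a join (rather than a functional update) preserves the inclusion $\absmap(\state')\wt\astate'$. I would isolate these into a single ``store-extension lemma'' stating that if $\absmap(\store)\wt\astore$ and $\absmap(v) \wt \av$, then $\absmap(\store[\addr\mapsto v]) \wt \astore \join [\aalloc(\astate) \mapsto \{\av\}]$, which relies on property (ii) above. Once this lemma is in hand, each of the four cases is a short, mechanical verification, and the theorem follows by quantifying over the (single) rule that applies to $\state$.
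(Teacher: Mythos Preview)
Your approach is essentially the paper's: case analysis on the four transition rules, with the nondeterministic cases (variable lookup and the $\fnk$ rule) handled by appealing to the store component of the assumption $\absmap(\state)\wt\astate$. Your store-extension lemma and your explicit treatment of the nondeterministic choices are in fact more careful than the paper's own sketch, which loosely groups the deterministic $\ark$ case with the nondeterministic ones.

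There is one step you skipped that the paper makes explicit. The theorem hypothesis is $\state \longmapsto_{\mathit{CEK}} \state'$, a CEK transition, but $\absmap$ is defined on $\mathit{CESK}^\star_\tm$ states and your case analysis is over the four $\mathit{CESK}^\star_\tm$ rules. The paper's proof begins by invoking Lemmas~\ref{lem:store-equiv}, \ref{lem:pointer-equiv}, and \ref{lem:time-stamp-equiv} (the lock-step equivalences CEK $\simeq$ CESK $\simeq$ CESK$^\star$ $\simeq$ CESK$^\star_\tm$) to transport the hypothesis to a $\mathit{CESK}^\star_\tm$ transition; only then does the case analysis apply. Without this reduction your proof does not quite type-check: $\absmap$ cannot literally be applied to a CEK state, and the CEK machine has no store for your lemmas to talk about. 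The fix is a one-line citation of those lemmas at the top of your argument.
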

\begin{proof}
  By Lemmas~\ref{lem:store-equiv},~\ref{lem:pointer-equiv},
  and~\ref{lem:time-stamp-equiv}, it suffices to prove soundness with
  respect to $\longmapsto_{\mathit{CESK}^\star_\tm}$.
  Assume $\state \longmapsto_{\mathit{CESK}^\star_\tm} \state'$ and
  $\alpha(\state) \wt \astate$.
  Because $\state$ transitioned, exactly one of the rules from the definition of
  $(\longmapsto_{\mathit{CESK}^\star_\tm})$ applies.
  We split by cases on these rules.
  The rule for the second case is deterministic and follows by calculation.
  For the the remaining (nondeterministic) cases, we must show
  an abstract state exists such that the simulation is
  preserved.
  By examining the rules for these cases, we see that all three hinge
  on the abstract store in $\astate$ soundly approximating the
  concrete store in $\state$, which follows from the assumption that
  $\absmap(\state) \wt \astate$.
\end{proof}

\section{An approximation like $k$-CFA}
\label{sec:labelled-acesk}

In this section, we instantiate the time-stamped CESK$^\star$ machine to
obtain a contour-based machine; this instantiation forms the basis of a
context-sensitive abstract interpreter with polyvariance like that found in
$k$-CFA~\cite{mattmight:Shivers:1991:CFA}.
In preparation for abstraction, we first refine the time-stamped machine to
link the allocation of times and addresses.
Under abstraction, this link  defines the relationship between
\emph{context-sensitivity} and \emph{polyvariance} in static analysis.

\subsection{A machine with time-based allocation}
We can take the last concrete machine and refine it so that the allocation of
times and addresses are linked.
We do so by creating two kinds of addresses: variable binding
addresses and continuation addresses:
\begin{align*}
  \s{Addr} &= 
  \overbrace{\syn{Var} \times \s{Time}}^{\text{binding addr.}}
  + 
  \overbrace{\syn{Exp} \times \s{Time}}^{\text{cont. addr.}}
\end{align*}
When a variable is stored, the address it receives is a combination of
itself and the time of its binding.
When a continuation is stored, the address it receives is a combination of
the expression forcing the creation of the continuation plus the 
time of its creation.
In both cases, the freshness of the time ensures the freshness of the address.

With all the domains together in Haskell:
\begin{centercode}
type \(\Sigma\) = (Exp,Env,Store,Kont,Time)
data Storable = Clo (Lambda, Env) | Cont Kont
type Env = Var :-> Addr
type Store = Addr :-> Storable
data Kont = Mt | Ar (Exp,Env,Addr) | Fn (Lambda,Env,Addr)
type Time = Int
data Addr = KAddr (Exp, Time)
          | BAddr (Var, Time)
\end{centercode}

The formal concrete semantics do not change with this machine.  In Haskell,
however, it helps to split allocation into two functions---one that allocates
addresses for variables, and the other for continuations: 
\begin{centercode}
allocBind :: (Var,Time) -> Addr
allocBind (v,t) = BAddr (v,t)

allocKont :: (Exp,Time) -> Addr
allocKont (e,t) = KAddr (e,t)
\end{centercode}
so that the \texttt{step} function invokes each as appropriate:
\begin{centercode}
step :: \(\Sigma\) -> \(\Sigma\)
step \(\varsigma\)@(Ref x, \(\env\), \(\sigma\), \(\kappa\), t) = (Lam lam, \(\env\)', \(\sigma\), \(\kappa\), t')  
 where Clo(lam, \(\env\)') = \(\sigma\)!(\(\env\)!x)
       t' = tick(\(\varsigma\))

step \(\varsigma\)@(f :@ e, \(\env\), \(\sigma\), \(\kappa\), t) = (f, \(\env\), \(\sigma\)', \(\kappa\)', t')
 where a' = allocKont(f :@ e, t')
       \(\sigma\)' = \(\sigma\) // [a' ==> Cont \(\kappa\)]
       \(\kappa\)' = Ar(e, \(\env\), a')
       t' = tick(\(\varsigma\))

step \(\varsigma\)@(Lam lam, \(\env\), \(\sigma\), Ar(e, \(\env\)', a'), t) 
     = (e, \(\env\)', \(\sigma\), Fn(lam, \(\env\), a'), t') 
 where t' = tick(\(\varsigma\)) 

step \(\varsigma\)@(Lam lam, \(\env\), \(\sigma\), Fn(x :=> e, \(\env\)', a), t) 
     = (e, \(\env\)' // [x ==> a'], \(\sigma\) // [a' ==> Clo(lam, \(\env\))], \(\kappa\), t') 
 where Cont \(\kappa\) = \(\sigma\)!a
       a' = allocBind(x, t')
       t' = tick(\(\varsigma\))
\end{centercode}

\subsection{Instantiating time as context}
Up to this point, we have left time opaque (or used the integers in Haskell).
In this section, we will change the structure of time so as to (1) encode
execution context, and (2) make it more easily abstractable.

Call strings have long served as a measure of execution contexts in 
program analysis~\cite{mattmight:Sharir:1981:CallStrings}.
To take this approach in the abstract machine framework,
we set time to the sequence of expressions seen since the start of execution:
\begin{equation*}
  \s{Time} = 
  \syn{Exp}^*
  \text.
\end{equation*}
Then, we modify the $\tick$ function to prepend the current expression:
\begin{align*}
\tick\langle \expr, \_, \_, \_, \tm\rangle &= \expr : \tm
\end{align*}
Of course, this definition captures \emph{expression} strings rather than \emph{call} strings.
Call strings are recoverable by ignoring the non-application terms in the sequence.

In Haskell, only the definition of the type \texttt{Time} and the function
\texttt{tick} change:
\begin{centercode}
type Time = [Exp]

tick :: \(\Sigma\) -> Time
tick (e,\_,\_,\_,t) = e : t
\end{centercode}

%

\subsection{A machine for $k$-CFA-like approximation}

Bounding the length of the time in the previous machine to at most $k$ and then
apply the abstraction process yields a $k$-CFA-like machine.

Formally, the $\tick$ function restricts itself to the last $k$
call sites:
\begin{align*}
\tick\langle \expr, \_, \_, \_, \tm\rangle &= 
\lfloor \expr : \tm \rfloor_k
\end{align*}
or, in Haskell:
\begin{centercode}
tick :: \(\Sigma\) -> Time
tick (e,\_,\_,\_,t) = take k (e : t)
\end{centercode}

\paragraph{Comparison to \texorpdfstring{$\boldsymbol k$}{k}-CFA:}

We say ``$k$-CFA-like'' rather than ``$k$-CFA'' because there are
distinctions between the machine just described and $k$-CFA:
\begin{enumerate}

\item $k$-CFA focuses on ``what flows where''; the ordering between
  states in the abstract transition graph produced by our machine
  produces ``what flows where \emph{and when}.''

\item Standard presentations of $k$-CFA implicitly inline a global
  approximation of the store into the
  algorithm~\cite{mattmight:Shivers:1991:CFA}; ours uses one store per
  state to increase precision at the cost of complexity.  
%
%
  We can explicitly inline the store to
  achieve the same complexity, as shown in Section~\ref{sec:widening}.

\item On function call, $k$-CFA merges argument values together with
  previous instances of those arguments from the same context; our
  ``minimalist'' evolution of the abstract machine takes a
  higher-precision approach: it forks the machine for each argument
  value, rather than merging them immediately.

\item $k$-CFA does not recover explicit information about stack
  structure; our machine contains an explicit model of the stack for
  every machine state.

\end{enumerate}

\subsection{A machine for 0-CFA-like approximation}

Let $k=0$.  Notice that $\sa{Time}$ collapses to a constant,
and $\sa{Addr}$ collapses to variables and expressions.
Since time-stamps have collapsed, they may be eliminated from the machine entirely:
\begin{align*}
\sa{Addr} &= \syn{Exp} + \syn{Var}
\end{align*}

By in-lining the allocation function
and observing environments in the in-lined 0-CFA machine are always the
identity environment, they can be eliminated, we obtain a
machine for 0-CFA:
\[
\begin{grammar}
  \astate &\in& \aState &=& 
  \syn{Exp} \times \sa{Store} \times \s{Cont}
  \\
  \sto &\in& \Storable &=& \s{Lam} + \s{Cont}
  \\
  \kappa &\in& \s{Cont} &\produces&
  \mtk \opor \ark(\expr,\addr) \opor \fnk(\lam,\addr)
  \\
  \addr &\in& \s{Addr} &=& \s{Exp} + \s{Var}
\end{grammar}
\]
In Haskell:
\begin{centercode}
type \(\Sigma\) = (Exp,Store,Kont)
data Storable = Clo Lambda | Cont Kont
type Store = Addr :-> \(\mathbb{P}\)(Storable)
data Kont = Mt | Ar(Exp,Addr) | Fn(Lambda,Addr)
data Addr = KAddr Exp | BAddr Var
\end{centercode}

\begin{machine}{0CFA}
\mrule{\langle\vv, \astore, \kappa\rangle}
      {\langle \den, \astore, \kappa\rangle
        \text{ where } \den \in \astore(\vv)}
\mnext
\mrule{\langle\appform{\expr_0}{\expr_1}, \astore, \kappa\rangle}
      {\langle\expr_0, \astore \join [\addr \mapsto \set{\kappa}],  \ark(\expr_1, \addr)\rangle}
      \text{ where }
      \addr = \appform{\expr_0}{\expr_1}
\mnext
\mrule{\langle \den,\astore, \ark(\expr,\addr)\rangle}
      {\langle\expr,\astore,\fnk(\den,\addr)\rangle}
\mnext
\mrule{\langle \den,\astore, \fnk(\lamform{\vv}{\expr},\addr)\rangle}
      {\langle\expr,\astore \join [\vv \mapsto \set{\den}], \kappa\rangle
        \mbox{ where }\kappa \in \astore(\addr)}
\end{machine}%
In Haskell:
\begin{centercode}
step :: \(\Sigma\) -> [\(\Sigma\)]
step (Ref x, \(\sigma\), \(\kappa\)) =
     [ (Lam lam, \(\sigma\), \(\kappa\)) 
     | Clo(lam) <- Data.Set.toList \$! \(\sigma\)!!(BAddr x) ]

step (f :@ e, \(\sigma\), \(\kappa\)) = [ (f, \(\sigma\)', Ar(e,a')) ] 
 where \(\sigma\)' = \(\sigma\) \(\join\) [a' ==> s(Cont \(\kappa\))] 
       a' = KAddr (f :@ e)

step (Lam lam, \(\sigma\), Ar(e, a')) = [ (e, \(\sigma\), Fn(lam, a')) ]

step (Lam lam, \(\sigma\), Fn(x :=> e, a)) 
   = [ (e, \(\sigma\) \(\join\) [BAddr x ==> s(Clo(lam))], \(\kappa\)) 
     | Cont \(\kappa\) <- Data.Set.toList \$! \(\sigma\)!!a ]
\end{centercode}

\subsection{Widening to improve complexity}
\label{sec:widening}

If implemented na\"ively, it takes time exponential in the size of the
input program to compute the reachable states of the abstracted machines.
Consider the size of the state-space for the abstract time-stamped
\CESP{} machine:
\begin{align*}
& \abs{\syn{Exp}
    \times 
    \s{Env}
    \times
    \sa{Store}
    \times 
    \s{Kont}
    \times
    \s{Time}
  }
\\ 
=\; & \abs{\syn{Exp}}
\times
\abs{\s{Addr}}^{\abs{\syn{Var}}}
\times
\abs{\Storable}^{\abs{\s{Addr}}}
\times 
\abs{\s{Kont}}
\times
\abs{\s{Time}}
\text.
\end{align*}
Without simplifying any further, we clearly have an exponential number
of abstract states.

To reduce complexity, we can employ widening in the form of Shivers's
single-threaded store~\cite{mattmight:Shivers:1991:CFA}.
To use a single threaded store, we have to reconsider the abstract evaluation function itself.
Instead of seeing it as a function that returns the set of reachable
states, it is a function that returns a set of partial states plus a
single globally approximating store, \ie, $\avf : \syn{Exp} \to
\s{System}$, where:
\begin{align*}
  \s{System} &= \Pow{\syn{Exp} \times \s{Env} \times \s{Kont} \times \s{Time}} \times \sa{Store}
  \text.
\end{align*}
We compute this as a fixed point of 
a monotonic function, $f : \s{System} \to \s{System}$:
\begin{align*}
  f(C,\astore) &= (C',\astore'') \text{ where }
  \\
  Q' &= \setbuild{ (c',\astore') }{ c \in C \text{ and } (c,\astore) \longmapsto (c',\astore') } 
  \\
  (c_0,\astore_0) &\cong \inj(\expr)
  \\
  C' &= C \union \setbuild{ c' }{ (c',\_) \in Q' } \union \set{c_0}
  \\
  \astore'' &= \astore \join \bigjoin_{  (\_,\astore') \in Q' }  \astore'
\text,
\end{align*}
so that \(\avf(\expr) = \mathit{lfp}(f)\).
The maximum number of
iterations of the function $f$
times the cost of each iteration
bounds the complexity of the analysis.

\subsection{Polynomial complexity for monovariance}
It is straightforward to compute the cost of a monovariant (in our
framework, a ``0CFA-like'') analysis with this widening.
In a monovariant analysis, environments disappear; the
system-space simplifies to:
\begin{align*}
\s{System}_0
&= 
 \Pow{\syn{Exp} \times \s{Cont}}
 \times
 \sa{Store}
 \\
&  
 \cong
 (\syn{Exp} \to \Pow{\s{Cont}})
 \times
 (\syn{Addr} \to \Pow{\s{Storable}})
 \text.
\end{align*}
%
If ascended monotonically, one could add one new partial state each
time or introduce a new entry into the global store.
Thus, the maximum number of monovariant iterations is:
\begin{align*}
 \abs{\syn{Exp}} 
 \times
 \abs{\s{Cont}}
 +
 \abs{\s{Addr}}
 \times 
 \abs{\s{Storable}}
\end{align*}
%
which is polynomial in the size of the program:
\begin{align*}
 \abs{\syn{Exp}} 
 \times
 \overbrace{
 (1 + \abs{\syn{Exp}}^2 + \abs{\syn{Exp}}^2)}^{\abs{\s{Cont}}}
 + 
 \overbrace{(\abs{\syn{Var}} + \abs{\syn{Exp}})}^
 {\abs{\s{Addr}}}
 \times
 \overbrace{
 (\abs{\syn{Lam}} + (1 + \abs{\syn{Exp}}^2 + \abs{\syn{Exp}}^2))}
 ^{\abs{\s{Storable}}}
\end{align*}

\section{Analyzing by-need with Krivine's machine}
\label{sec:krivine}
Even though the abstract machines of the prior section have advantages
over traditional CFAs, the approach we took (store-allocated
continuations) yields more novel results when applied in a different
context: a lazy variant of Krivine's machine.
That is, we can construct an abstract interpreter that both analyzes
and exploits laziness.
Specifically, we present an abstract analog to a lazy and properly
tail-recursive variant of Krivine's
machine~\citeyearpar{dvanhorn:Krivine1985Un,dvanhorn:Krivine2007Callbyname}
derived by Ager, Danvy, and
Midtgaard~\cite{dvanhorn:Ager2004Functional}.  
The derivation from Ager \etal's machine to the abstract interpreter
follows the same outline as that of Section~\ref{sec:cek-to-acesk}: we
apply a pointer refinement by store-allocating continuations and
carry out approximation by bounding the store.

The by-need variant of Krivine's machine considered here uses the
common implementation technique of store-allocating thunks and forced
values.  When an application is evaluated, a thunk is created that
will compute the value of the argument when forced.  
Evaluating a variable bound to a thunk causes the thunk to be forced,
which updates the store to point to the value produced by evaluating
the thunk, then produces that value.
Otherwise, evaluating a variable bound to a forced value just produces
that value.

Storable values include delayed computations (thunks)
$\delayed(\expr,\env)$, and computed values $\computed(\den,\env)$,
which are just tagged closures.
There are two continuation constructors: $\kok(\addr,\cont)$ is
induced by a variable occurrence whose binding has not yet
been forced to a value.  The address $\addr$ is where we want to write
the given value when this continuation is invoked. The other:
$\ktk(\addr,\cont)$ is induced by an application expression, which
forces the operator expression to a value.  The address $\addr$ is the
address of the argument.

The concrete state-space is defined as follows and the transition
relation is defined as:
\[
\begin{grammar}
  \state &\in& \State &=& 
  \syn{Exp} \times \s{Env} \times \s{Store} \times \s{Cont} 
  \\
  \sto &\in & \Storable &\produces & \delayed(\expr,\env) \opor \computed(\den,\env) 
  \\
  \cont &\in &\s{Cont}&\produces &\mtk \opor \kok(\addr,\cont) \opor \ktk(\addr,\cont)
\end{grammar}
\]

\begin{machine}{LK}
\mrule{\langle\vv, \env, \store, \cont\rangle}
      {\langle \expr, \env', \store, \kok(\rho(\vv),\cont)\rangle
        \mbox{, if }\store(\rho(\vv)) = \delayed(\expr,\env')}
\mnext
\mrule{\langle\vv, \env, \store, \cont\rangle}
      {\langle\den, \env', \store, \cont\rangle
        \mbox{, if }\store(\rho(\vv)) = \computed(\den,\env')}
\mnext
\mrule{\langle\appform{\expr_0}{\expr_1}, \env, \store, \cont\rangle}
      {\langle\expr_0, \env, \store[\addr \mapsto \delayed(\expr_1,\env)], \ktk(\addr,\cont) \rangle \mbox{where }\addr \notin \dom(\store)}
\mnext
\mrule
    {\langle\den,\env,\store,\kok(\addr,\cont)\rangle}
    {\langle\den,\env,\store[\addr\mapsto \computed(\den,\env)],\cont\rangle}
\mnext
\mrule{\langle\lamform{\vv}{\expr},\env,\store,\ktk(\addr,\cont)\rangle}
      {\langle\expr,\env[\vv\mapsto\addr],\store,\cont\rangle}
\end{machine}%

When the control component is a variable, the machine looks up its
stored value, which is either computed or delayed.  If delayed, a
$\kok$ continuation is pushed and the frozen expression is put in
control.  If computed, the value is simply returned.  When a value is
returned to a $\kok$ continuation, the store is updated to reflect the
computed value.  When a value is returned to a $\ktk$ continuation,
its body is put in control and the formal parameter is bound to the
address of the argument.

We now refactor the machine to use store-allocated continuations;
storable values are extended to include continuations:
\[
\begin{grammar}
  \state &\in& \State &=& 
  \syn{Exp} \times \s{Env} \times \s{Store} \times \s{Addr} 
  \\
  \sto &\in & \Storable &\produces & \delayed(\expr,\env) \opor \computed(\den,\env) \opor \cont 
  \\
  \cont &\in &\s{Cont}&\produces &\mtk \opor \kok(\addr,\addr) \opor \ktk(\addr,\addr)
  \text.
\end{grammar}
\]
It is straightforward to perform a pointer-refinement of the LK
machine to store-allocate continuations as done for the CESK machine
in Section~\ref{sec:ceskp} and
observe the lazy variant of Krivine's machine
and its pointer-refined counterpart (not shown) operate in lock-step:
\begin{lemma}
$\evf_{\mathit{LK}}(\expr) \simeq \evf_{\mathit{LK}^\star}(\expr)$.
\end{lemma}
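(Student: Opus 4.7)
The plan is to establish a lock-step bisimulation, mirroring the proof of Lemma~\ref{lem:pointer-equiv} for the CESK machines. First I would define a relation $\sim$ between LK states and LK$^\star$ states. The key component is a coinductive-style relation $\cont \sim_{\store'} \addr$ that ``unrolls'' a chain of continuation addresses in LK$^\star$ back into the inductive continuation of LK: either $\cont = \mtk$ and $\store'(\addr) = \mtk$, or $\store'(\addr) = \kok(\addr_v,\addr')$ with $\cont = \kok(\addr_v,\cont')$ and $\cont' \sim_{\store'} \addr'$, and symmetrically for $\ktk$. A full state relation then requires identical control expressions, environments agreeing on their domain (after identifying addresses up to the renaming allowed by the machines), stores agreeing on binding addresses, and the LK continuation unrolling to the LK$^\star$ continuation pointer under the LK$^\star$ store.

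Next I would verify that corresponding initial states are related: both machines inject $\expr$ with empty environment and empty store, with the trivial continuation $\mtk$ in LK corresponding in LK$^\star$ to (say) an address pointing to $\mtk$, which trivially satisfies the unrolling condition.

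The main content is a case analysis on the four transition rules of LK, exhibiting in each case a unique matching LK$^\star$ transition that preserves $\sim$. The variable-delayed rule pushes a $\kok$ frame; in LK$^\star$ this corresponds to allocating a fresh continuation address holding the prior continuation pointer, and the unrolling extends accordingly. The variable-computed rule is transparent to the continuation and preserves $\sim$ immediately. The application rule allocates a thunk at a fresh binding address and pushes a $\ktk$ frame, which in LK$^\star$ requires allocating both a thunk address and a continuation address, but again the unrolling is preserved. The two value-return rules (for $\kok$ and $\ktk$) conversely dereference the current continuation pointer in LK$^\star$ and continue with the pointer found there, exactly matching the pop in LK; here the only store modification is the memoization write for $\kok$, which touches a binding address and so preserves agreement on binding addresses and the unrolling on continuation addresses.

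The main obstacle is bookkeeping around freshness and the disjointness of the two classes of addresses in LK$^\star$. I would maintain as an invariant along the bisimulation that continuation-holding addresses and binding-holding addresses in the LK$^\star$ store are disjoint, so that LK$^\star$'s store restricted to binding addresses coincides with LK's store. Since both machines choose fresh addresses nondeterministically and states are identified up to consistent renaming of addresses, the $\alpha$-equivalence convention lets us align the fresh choices in each step. A routine induction on the length of transition sequences then lifts the one-step simulation to the stated equivalence $\evf_{\mathit{LK}}(\expr) \simeq \evf_{\mathit{LK}^\star}(\expr)$ on reachable-state sets.
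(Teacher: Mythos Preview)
Your approach is correct and matches what the paper intends: the paper gives no explicit proof for this lemma, simply asserting that the pointer-refined LK$^\star$ machine operates in lock-step with LK by analogy to Lemma~\ref{lem:pointer-equiv} for the CESK/CESK$^\star$ pair. Your bisimulation via an unrolling relation on continuation addresses, together with the disjointness invariant separating binding addresses from continuation addresses, is precisely the standard way to discharge such a pointer-refinement equivalence. One minor slip: LK has five transition rules, not four, though your case analysis does in fact cover all five.
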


After threading time-stamps through the machine as done in
Section~\ref{sec:secpt} and defining $\atick$ and $\aalloc$
analogously to the definitions given in Section~\ref{sec:acespt}, the
pointer-refined machine abstracts directly to yield the abstract
LK$^\star$ machine:
\begin{machine}{\mathit{LK}^\star_\tm}
\mrule{\langle\vv, \env, \astore, \cont, \tm\rangle}
      {\langle \expr, \env', \astore \sqcup [\addr_0\mapsto \cont], 
        \kok(\rho(\vv),\addr_0),\tmnext\rangle
\mbox{ if }\astore(\rho(\vv)) \ni \delayed(\expr,\env')}
\mnext
\mrule{\langle\vv, \env, \astore, \cont,\tm\rangle}
      {\langle\den, \env', \astore, \cont,\tmnext\rangle
        \mbox{ if }\astore(\rho(\vv)) \ni \computed(\den,\env')}
\mnext
\mrule{\langle\appform{\expr_0}{\expr_1}, \env, \astore, \cont,\tm\rangle}
      {\langle\expr_0, \env, \astore\sqcup[\addr_0 \mapsto \delayed(\expr_1,\env),
  \addr_1\mapsto \cont], \ktk(\addrnextnext,\addr_0),\tmnext\rangle}
\mnext
\mrule{\langle\den,\env,\astore,\kok(\addr',\addrnextnext),\tm\rangle}
      {\langle\den,\env',\astore\join[\addr'\mapsto \computed(\den,\env)],\cont,\tmnext\rangle 
        \mbox{ if }\cont \in \astore(\addrnextnext)}
\mnext
\mrule{\langle\lamform{\vv}{\expr},\env,\astore,\ktk(\addr,\addrnextnext),\tm\rangle}
      {\langle\expr,\env'[\vv\mapsto\addr],\astore,\cont,\tmnext\rangle
        \mbox{ if } \cont \in \astore(\addrnextnext)}        
\end{machine}%
where $\addr_{0..n}=\aalloc(\astate)$ and $\tmnext=\atick(\astate)$.

The abstraction map 
for this machine is a straightforward
structural abstraction similar to that given in
Section~\ref{sec:cesk-soundness} (and hence omitted).
The abstracted machine is sound with respect to the LK$^\star$
machine, and therefore the original LK machine.

\begin{theorem}[Soundness of the Abstract LK$^\star$ Machine]\ \\
  If $\state \longmapsto_{\mathit{LK}} \state'$ and
  $\alpha(\state) \wt \astate$, then there exists an abstract state
  $\astate'$, such that $\astate
  \longmapsto_{\widehat{\mathit{LK}^\star_\tm}} \astate'$ and
  $\alpha(\state') \wt \astate'$.
\end{theorem}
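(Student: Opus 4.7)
The plan is to mirror the structure of the proof of Theorem~\ref{thm:soundness} for the abstract CESK$^\star$ machine. First, I would invoke the lock-step equivalence between the LK machine and its pointer-refined, time-stamped counterpart LK$^\star_\tm$ (established by the lemma relating $\evf_{\mathit{LK}}$ and $\evf_{\mathit{LK}^\star}$, together with the routine time-stamp extension analogous to Lemma~\ref{lem:time-stamp-equiv}). This reduces the problem to showing that whenever $\state \longmapsto_{\mathit{LK}^\star_\tm} \state'$ and $\absmap(\state) \wt \astate$, there is some $\astate'$ with $\astate \longmapsto_{\widehat{\mathit{LK}^\star_\tm}} \astate'$ and $\absmap(\state') \wt \astate'$.

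Next, I would proceed by case analysis on which of the five LK$^\star_\tm$ rules fired: variable dereference to a thunk, variable dereference to a computed value, application, return to a $\kok$ frame, and return to a $\ktk$ frame. The assumption $\absmap(\state) \wt \astate$ forces $\astate$ to have the same control expression and an environment/continuation-pointer that component-wise approximates those of $\state$; in particular, whatever concrete storable $\state$ consults (a thunk, a computed value, or a continuation stored at the top address) must be contained, after abstraction, in the corresponding set in the abstract store of $\astate$. This lets us pick the matching nondeterministic branch of the abstract rule. The freshness requirements on $\aalloc$ and $\atick$ guarantee that the allocations the abstract rule performs are sound approximations of the concrete ones, and the use of $\sqcup$ in the abstract store update (versus overwriting in the concrete) only enlarges the store, preserving $\wt$.

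The key step in each case is verifying that $\absmap(\state') \wt \astate'$ after the transition. For the two variable-lookup cases and the two return cases, this reduces to the observation that $\absmap$ commutes with store extension up to $\sqcup$, together with the soundness of $\aalloc$ and $\atick$ (that is, $\absmap \circ \alloc = \aalloc \circ \absmap$ and similarly for $\tick$, in the appropriate simulation sense). For the application rule, one additionally needs that abstracting a freshly allocated thunk $\delayed(\expr_1,\env)$ and storing it at an approximating address preserves the simulation; this is immediate from the definition of $\absmap$ on storables (extended pointwise over the new $\Storable$ summands $\delayed$, $\computed$, and $\cont$).

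The main obstacle, and the only real subtlety beyond the CESK$^\star$ case, is the treatment of the $\kok$ rule, where the concrete machine performs a destructive update $\store[\addr \mapsto \computed(\den,\env)]$ that models memoization of the forced thunk. In the abstract world this becomes a join $\astore \sqcup [\addr \mapsto \{\computed(\den,\env)\}]$, which retains the old $\delayed(\expr,\env')$ entry. I would argue that this is exactly what soundness demands: the abstract store still approximates every reachable concrete store, including those arising from interleavings in which some occurrences of the binding have been memoized and others have not. Once this is spelled out, the remaining cases are routine calculations analogous to those in the proof of Theorem~\ref{thm:soundness}.
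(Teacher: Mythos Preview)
Your proposal is correct and matches the paper's approach exactly: the paper gives no explicit proof for this theorem, noting only that the abstraction map and soundness argument are ``a straightforward structural abstraction similar to that given in Section~\ref{sec:cesk-soundness},'' which is precisely the template you follow (and in considerably more detail than the paper itself supplies). One minor wording slip: $\aalloc$ and $\atick$ carry no freshness requirements in the abstract---what you need, and correctly state a sentence later, is the simulation condition that $\absmap$ commutes with $\alloc$ and $\tick$.
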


\subsection{Optimizing the machine through specialization}
Ager \etal~optimize the LK machine by specializing application
transitions.  When the operand of an application is a variable, no
delayed computation needs to be constructed, thus ``avoiding the
construction of space-leaky chains of thunks.''  Likewise, when the
operand is a $\lambda$-abstraction, ``we can store the corresponding
closure as a computed value rather than as a delayed computation.''
Both of these optimizations, which conserve valuable abstract
resources, can be added with no trouble:
\begin{machine}{\widehat{\mathit{LK}^\star}}
\mrule{\langle\appform{\expr}{\vv}, \env, \astore, \cont,\tm\rangle}
      {\langle\expr, \env, \astore \join[\addr_0\mapsto\cont],\ktk(\env(\vv),\addr_0),\tmnext\rangle}
\mnext
\mrule{\langle\appform{\expr}{\den}, \env, \astore, \cont,\tm\rangle}
      {\langle\expr_0, \env, \astore \join [\addr_0 \mapsto \computed(\den,\env), \addr_1\mapsto\cont], \ktk(\addr_0,\addr_1),\tmnext\rangle}
\end{machine}%
where $\addr_{0..n}=\aalloc(\astate)$ and $\tmnext=\atick(\astate)$.

\subsection{Varying the machine through postponed thunk creation}
Ager \etal~also vary the LK machine by postponing the construction of
a delayed computation from the point at which an application is the
control string to the point at which the operator has been evaluated
and is being applied.  The $\ktk$ continuation is modified to hold,
rather than the address of a delayed computation, the constituents of
the computation itself:
\[
\begin{grammar}
  \cont &\in &\s{Cont}&\produces &\mtk \opor \kok(\addr,\addr) \opor \ktk(\expr,\env,\addr)
  \text.
\end{grammar}
\]
The transitions for applications and functions
are replaced with:
\begin{machine}{\widehat{\mathit{LK}'^\star}}
\mrule{\langle\appform{\expr_0}{\expr_1}, \env, \astore, \cont,\tm\rangle}
      {\langle\expr_0, \env, \astore\sqcup[\addr_0 \mapsto \cont],\ktk(\expr_1,\env,\addr_0),\tmnext\rangle}
\mnext
\mrule{\langle\lamform{\vv}{\expr},\env,\astore,\ktk(\expr',\env',\addrnextnext),\tm\rangle}
      {\langle\expr,\env[\vv\mapsto\addr_0],\astore\sqcup[\addr_0\mapsto \delayed(\expr',\env')],\cont,\tmnext\rangle
        \mbox{ if }\cont\in\astore(\addrnextnext)}
\end{machine}%
where $\addr_{0..n}=\aalloc(\astate)$ and $\tmnext=\atick(\astate)$.
This allocates thunks when a function is applied,
rather than when the control string is an application.

As Ager~\etal\ remark, each of these variants gives rise to an
abstract machine.  From each of these machines, we are able to
systematically derive their abstractions.

\section{State and control}
\label{sec:realistic-features}

We have shown that store-allocated continuations make abstract
interpretation of the CESK machine and a lazy variant of Krivine's
machine straightforward.
In this section, we want to show that the tight correspondence between
concrete and abstract persists after the addition of language
features such as conditionals, side effects, exceptions and
continuations.
We tackle each feature, and present the additional machinery required
to handle each one.
In most cases, the path from a canonical concrete machine to
pointer-refined abstraction of the machine is so simple we only show
the abstracted system.
In doing so, we are arguing that this abstract machine-oriented
approach to abstract interpretation represents a flexible and viable
framework for building abstract interpreters.

\subsection{Conditionals, mutation, and control}

To handle conditionals, we extend the language with a new syntactic
form, $\ifform{\expr}{\expr}{\expr}$, and introduce a base value
$\schfalse$, representing false.
Conditional expressions induce a new continuation form:
$\mathbf{if}(\expr'_0,\expr'_1,\env,\addr)$, which represents the
evaluation context $E[\ifform{[\;]}{\expr_0}{\expr_1}]$ where $\env$
closes $\expr'_0$ to represent $\expr_0$, $\env$ closes $\expr'_1$ to
represent $\expr_1$, and $\addr$ is the address of the representation
of $E$.

Side effects are fully amenable to our approach;
we introduce Scheme's \texttt{set!} for mutating variables using the
$\appform{{\tt set!}\;}{\vv\; \expr}$ syntax.  The \texttt{set!}  form
evaluates its subexpression $\expr$ and assigns the value to the
variable $\vv$.  Although \texttt{set!} expressions are evaluated for
effect, we follow Felleisen \etal~and specify \texttt{set!}
expressions evaluate to the value of $\vv$ before it was
mutated~\cite[page 166]{dvanhorn:Felleisen2009Semantics}.  The
evaluation context $E[\appform{{\tt set!}\;}{\vv\; [\;]}]$ is
represented by $\mathbf{set}(\addr_0,\addr_1)$, where $\addr_0$ is the
address of $\vv$'s value and $\addr_1$ is the address of the
representation of $E$.

First-class control is introduced by adding a new base value {\tt
  callcc} which reifies the continuation as a new kind of applicable
value.  Denoted values are extended to include representations of
continuations.  Since continuations are store-allocated, we choose to
represent them by address.  When an address is applied, it represents
the application of a continuation (reified via {\tt callcc}) to a
value.  The continuation at that point is discarded and the applied
address is installed as the continuation.

The resulting grammar is:
\[
\begin{grammar}
 \expr &\in& \syn{Exp} &\produces& \dots \opor \ifform{\expr}{\expr}{\expr}
 \opor \appform{{\tt set!}\;}{\vv\; \expr}
\\
  \kappa &\in& \s{Cont} &\produces&
  \dots \opor \mathbf{if}(\expr,\expr,\env,\addr) \opor \mathbf{set}(\addr,\addr)
\\
  \den &\in& \Den &\produces& \dots \opor \schfalse \opor  {\tt callcc} \opor \cont
  \text.
\end{grammar}
\]
We show only the abstract transitions, which result from
store-allocating continuations, time-stamping, and abstracting the
concrete transitions for conditionals, mutation, and control.
The first three machine transitions deal with conditionals; here we
follow the Scheme tradition of considering all non-false values as
true.
The fourth and fifth transitions deal with mutation.
\begin{machine}{\widehat{\mathit{CESK}^\star_\tm}}
\mrule{\langle\ifform{\expr_0}{\expr_1}{\expr_2},\env,\astore,\cont,\tm\rangle}
      {\langle\expr_0,\env,\astore \sqcup [\addr\mapsto \cont],\mathbf{if}(\expr_1,\expr_2,\env,\addr),\tmnext\rangle}
\mnext
\mrule{\langle\schfalse,\env,\astore,\mathbf{if}(\expr_0,\expr_1,\env',\addrnextnext),\tm\rangle}
      {\langle\expr_1,\env',\astore,\cont,\tmnext\rangle
        \mbox{ if }\cont \in \astore(\addrnextnext)}
\mnext
\mrule{\langle\den,\env,\astore,\mathbf{if}(\expr_0,\expr_1,\env',\addrnextnext),\tm\rangle}
      {\langle\expr_0,\env',\astore,\addrnextnext,\tmnext\rangle
        \mbox{ if }\cont \in \astore(\addrnextnext)
        \text{ and } 
        \den\neq\schfalse}
\mnext
\mrule{\langle\appform{{\tt set!}\;}{\vv\; \expr},\env,\astore,\cont,\tm\rangle}
      {\langle\expr,\env,\astore \sqcup [\addr\mapsto \cont],\mathbf{set}(\env(\vv), \addr),\tmnext\rangle}
\mnext
\mrule{\langle\den,\env,\astore,\mathbf{set}(\addr',\addrnextnext),\tm\rangle}
      {\langle\den',\env,\astore \join [\addr' \mapsto \den],\cont,\tmnext\rangle}
\\
&&  
        \mbox{ if }\cont \in \astore(\addrnextnext)
        \mbox{ and }\den' \in \astore(\addr')
\mnext
\mrule{\langle \lamform{\vv}{\expr},\env,\astore,\fnk({\tt callcc},\env',\addrnextnext),\tm\rangle}
      {\langle \expr,\env[\vv\mapsto \addr],\astore \join [\addr\mapsto \cont],\cont,\tmnext\rangle
      \mbox{ if }\cont \in \astore(\addrnextnext)}
\mnext
\mrule{\langle \cont,\env,\astore,\fnk({\tt callcc},\rho',\addrnextnext),\tm\rangle}
      {\langle \fnk({\tt callcc},\rho',\addrnextnext),\env,\astore,\cont,\tmnext\rangle}
\mnext
\mrule{\langle \den,\env,\astore,\fnk(\cont,\env',\addr'),\tm\rangle}
      {\langle \den,\env,\astore,\cont,\tmnext\rangle}
\end{machine}%

The remaining three transitions deal with first-class control.
In the first of these, {\tt callcc} is being applied to a closure
value $\den$.  The value $\den$ is then ``called with the current
continuation'', \ie, $\den$ is applied to a value that represents the
continuation at this point.
In the second, {\tt callcc} is being applied to a continuation
(address).  When this value is applied to the reified continuation, it
aborts the current computation, installs itself as the current
continuation, and puts the reified continuation ``in the hole''.
Finally, in the third, a continuation is being applied; $\addrnextnext$ gets
thrown away, and $\den$ gets plugged into the continuation $\addrnext$.

In all cases, these transitions result from pointer-refinement,
time-stamping, and abstraction of the usual machine
transitions.

\subsection{Exceptions and handlers}

To analyze exceptional control flow, we extend the CESK machine with a
register to hold a stack of exception handlers.  
This models a reduction semantics in which we have two additional kinds of
evaluation contexts:
\[
\begin{grammar}
 && E &\produces& [\;] \opor \appform{E}{\expr} \opor \appform{\den}{E}
\opor \catchform{E}{\den}
\\
 && F &\produces& [\;] \opor \appform{F}{\expr} \opor \appform{\den}{F}
\\
&& H &\produces& [\;] \opor H[F[\catchform{H}{\den}]]\text,
\end{grammar}
\]
and the additional, context-sensitive, notions of reduction:
\begin{align*}
\catchform{E[\throwform{\den}]}{\den'} &\rightarrow \appform{\den'}{\den}\\
\catchform{\den}{\den'} &\rightarrow \den\text.
\end{align*}
Here, $H$ contexts represent a stack of exception handlers, while $F$
contexts represent a ``local'' continuation, \ie, the rest of the
computation (with respect to the hole) up to an enclosing handler, if
any. $E$ contexts represent the entire rest of the computation,
including handlers.

The language is extended with expressions for raising and
catching exceptions.  A new kind of continuation is introduced to
represent a stack of handlers.  In each frame of the stack, there is a
procedure for handling an exception and a (handler-free) continuation:
\[
\begin{grammar}
  \expr &\in& \syn{Exp} &\produces& 
  \dots \opor \throwform{\den}
  \opor \catchform{\expr}{\lamform{\vv}{\expr}}
\\
  \handls &\in& \s{Handl} &\produces& \mtk \opor \handk(\den,\env,\cont,\handls)
\end{grammar}
\]
An $\handls$ continuation represents a stack of exception handler
contexts, \ie, $\handk(\den',\env,\cont,\handls)$
represents $H[F[\catchform{[\;]}{\den}]]$, where
$\handls$ represents $H$, $\cont$ represents $F$, and $\env$ closes
$\den'$ to represent $\den$.

The machine includes all of the transitions of the CESK machine
extended with a $\handls$ component; these transitions are omitted
for brevity.
The additional transitions are:
\begin{machine}{\mathit{CESHK}}
\mrule{\langle \den,\env,\store,\handk(\den',\env',\cont,\handls),\mtk\rangle }
      {\langle \den,\env,\store,\handls,\cont\rangle}
\mnext
\mrule{\langle\throwform{\den},\env,\store,\handk(\lamform{\vv}{\expr},\env',\cont',\handls),\cont\rangle}
      {\langle \expr, \env'[\vv\mapsto \addr],\store[\addr\mapsto (\den,\env)],\handls,\cont'\rangle}
\\
&&
        \mbox{ where }\addr\notin\dom(\store)
\mnext
\mrule{\langle \catchform{\expr}{\den},\env,\store,\handls,\cont\rangle}
      {\langle \expr,\env,\store,\handk(\den,\env,\cont,\handls),\mtk\rangle}
\end{machine}%

This presentation is based on a textbook treatment of exceptions
and handlers~\cite[page
  135]{dvanhorn:Felleisen2009Semantics}. To be precise,
  Felleisen \etal~present the CHC machine, a substitution-based
  machine that uses evaluation contexts in place of continuations.
  Deriving the CESHK machine from it is an easy exercise.

The initial configuration is given by:
\[
\inj_{\mathit{CESHK}}(\expr) = \langle\expr,\mte,\mts,\mtk,\mtk\rangle\text.
\]

In the pointer-refined machine,
the grammar of handler continuations changes to the following:
\[
\begin{grammar}
  \handls &\in& \s{Handl} &\produces&
  \mtk \opor \handk(\den,\env,\addr)\text,
\end{grammar}
\]
where $\addr$ is used to range over addresses pointing to a pair of
$\handls$ and $\cont$ continuations.
The pointer-refined machine is:
\begin{machine}{\mathit{CESHK}^\star}
\mrule{\langle \den,\env,\store,\handk(\den',\env',\addr),\mtk\rangle}
      {\langle \den,\env,\store,\handls,\cont\rangle
      \mbox{ if }(\handls,\cont)=\store(\addr)}
\mnext
\mrule{\langle \throwform{\den},\env,\store,\handk(\lamform{\vv}{\expr},\env',\addr),\cont\rangle}
      {\langle \expr, \env'[\vv\mapsto \addrnext],\store[\addrnext\mapsto (\den,\env)],\handls,\cont'\rangle}
\\
&&
      \mbox{ if }(\handls,\cont')=\store(\addr)
\mnext
\mrule{\langle \catchform{\expr}{\den},\env,\store,\handls,\cont\rangle}
      {\langle \expr,\env,\store[\addr\mapsto(\handls,\cont)],\handk(\den,\env,\addr),\mtk\rangle}
\end{machine}%

After threading time-stamps through the machine as done in
Section~\ref{sec:secpt}, the machine abstracts as expected:
\begin{machine}{\widehat{\mathit{CESHK}^\star_\tm}}
\mrule{\langle \den,\env,\astore,\handk(\den',\env',\addr),\mtk,\tm\rangle}
      {\langle \den,\env,\astore,\handls,\cont,\tmnext\rangle
      \mbox{ if }(\handls,\cont)\in\astore(\addr)}
\mnext
\mrule{\langle \throwform{\den},\env,\astore,\handk(\lamform{\vv}{\expr},\env',\addr),\cont,\tm\rangle}
      {\langle \expr, \env'[\vv\mapsto \addrnext],\astore\sqcup[\addrnext\mapsto(\den,\env)],\handls,\cont',\tmnext\rangle}
\\
&&
      \mbox{ if }(\handls,\cont') \in \store(\addr)
\mnext
\mrule{\langle \catchform{\expr}{\den},\env,\astore,\handls,\cont,\tm\rangle}
      {\langle \expr,\env,\astore\sqcup[\addr\mapsto(\handls,\cont)],\handk(\den,\env,\addr,\haddr),\mtk,\tmnext\rangle}
\end{machine}%

\section{Abstract garbage collection}
\label{sec:garbage-collection}

Garbage collection determines when a store location has become
unreachable and can be re-allocated.
This is significant in the abstract semantics because an address may be
allocated to multiple values due to finiteness of the address space.
Without garbage collection, the values allocated to this common
address must be joined, introducing imprecision in the analysis (and
inducing further, perhaps spurious, computation).
By incorporating garbage collection in the abstract semantics, the
location may be proved to be unreachable and safely \emph{overwritten}
rather than joined, in which case no imprecision is introduced.

Like the rest of the features addressed in this paper, we can
incorporate abstract garbage collection into our static analyzers by a
straightforward pointer-refinement of textbook accounts of concrete
garbage collection, followed by a finite store abstraction.

Concrete garbage collection is defined in terms of a GC machine that
computes the reachable addresses in a store~\citep[page
  172]{dvanhorn:Morrisett1995Abstract,dvanhorn:Felleisen2009Semantics}:
\[
\begin{array}{l}
\langle\mathcal{G},\mathcal{B},\store\rangle 
\longmapsto_{\mathit{GC}}
\langle(\mathcal{G}\cup \liveloc_\store(\store(\addr)) \setminus
(\mathcal{B}\cup\{\addr\})), \mathcal{B}\cup\{\addr\}, \store\rangle
\mbox{, if }\addr \in \mathcal{G}\text.
\end{array}
\]
This machine iterates over a set of reachable but unvisited ``grey''
locations $\mathcal{G}$.  On each iteration, an element is removed and
added to the set of reachable and visited ``black'' locations
$\mathcal{B}$.  Any newly reachable and unvisited locations, as
determined by the ``live locations'' function $\liveloc_\store$, are
added to the grey set.  When there are no grey locations, the black
set contains all reachable locations.  Everything else is garbage.

The live locations function computes a set of locations which may be
used in the store.  
 Its definition will vary based on the particular
machine being garbage collected, but the definition appropriate for
the CESK$^\star$ machine of Section~\ref{sec:ceskp} is
\begin{align*}
\liveloc_\store(\expr) &= \emptyset
\\
\liveloc_\store(\expr,\env) &= \liveloc_\store(\restrict{\env}{\fv(\expr)})
\\
\liveloc_\store(\env) &= \mathit{rng}(\env)
\\
\liveloc_\store(\mtk) &= \emptyset
\\
\liveloc_\store(\fnk(\den,\env,\addr)) &= \{\addr\} \cup \liveloc_\store(\den,\env) \cup \liveloc_\store(\store(\addr))
\\
\liveloc_\store(\ark(\expr,\env,\addr)) &= \{\addr\}\cup \liveloc_\store(\expr,\env) \cup \liveloc_\store(\store(\addr))\text.
\end{align*}
We write $\restrict\env{\fv(\expr)}$ to
mean $\env$ restricted to the domain of free variables in $\expr$.
We assume the least-fixed-point solution in the calculation of the
function $\liveloc$ in cases where it recurs on itself.

The pointer-refinement of the machine requires parameterizing the
$\liveloc$ function with a store used to resolve pointers to
continuations. 
A nice consequence of this parameterization is that we can re-use
$\liveloc$ for \emph{abstract garbage collection} by supplying it an
abstract store for the parameter.
Doing so only necessitates extending $\liveloc$ to the case of sets of
storable values:
\begin{align*}
\liveloc_\store(S) &= \bigcup_{s\in S} \liveloc_\store(s)
\end{align*}

The CESK$^\star$ machine incorporates garbage collection by a
transition rule that invokes the GC machine as a subroutine to remove
garbage from the store.
The garbage collection transition introduces nondeterminism to the
CESK$^\star$ machine because it applies to any machine state and thus
overlaps with the existing transition rules.
The nondeterminism is interpreted as leaving the choice of
\emph{when} to collect garbage up to the machine.

The abstract CESK$^\star$ incorporates garbage collection by the
\emph{concrete garbage collection transition}, \ie, we re-use the
definition below, only with an abstract store, $\astore$, in place of
the concrete one.
Consequently, it is easy to verify abstract garbage collection
approximates its concrete counterpart.

\begin{machine}{\mathit{CESK}^\star}
\mrule{\langle \expr,\env,\store,\cont\rangle}
      {\langle \expr,\env,\{\langle\addrnext,\store(\addrnext)\rangle\ |\ \addrnext \in \mathcal{L}\},\cont\rangle}
\\
\multicolumn{3}{l}{
\mbox{if }\langle\liveloc_\store(\expr,\env) \cup \liveloc_\store(\cont),\emptyset,\store\rangle \multistep_{\mathit{GC}} \langle\emptyset,\mathcal{L},\store\rangle}
\end{machine}%

The CESK$^\star$ machine may collect garbage at any point in the
computation, thus an abstract interpretation must soundly approximate
\emph{all possible choices} of when to trigger a collection, which 
the abstract CESK$^\star$ machine does correctly.
This may be a useful analysis \emph{of} garbage collection, however it
fails to be a useful analysis \emph{with} garbage collection: for
soundness, the abstracted machine must consider the case in which
garbage is never collected, implying no storage is reclaimed to
improve precision.

However, we can leverage abstract garbage collection to reduce the
state-space explored during analysis and to improve precision and
analysis time.
This is achieved (again) by considering properties of the
\emph{concrete} machine, which abstract directly; in this case,
we want the concrete machine to deterministically collect garbage.
Determinism of the CESK$^\star$ machine is restored by defining the
transition relation as a non-GC transition 
followed by the GC transition.
This state-space of this concrete machine is ``garbage free'' and
consequently the state-space of the abstracted machine is ``abstract
garbage free.''

In the concrete semantics, a nice consequence of this property is that
although continuations are allocated in the store, they are
deallocated as soon as they become unreachable, which corresponds to
when they would be popped from the stack in a non-pointer-refined
machine.  Thus the concrete machine really manages continuations like
a stack.

Similarly, in the abstract semantics, continuations are deallocated as
soon as they become unreachable, which often corresponds to when they
would be popped.  We say often, because due to the finiteness of the
store, this correspondence cannot always hold.
However, this approach gives a good finite approximation to infinitary
stack analyses that can always match calls and returns.

\section{Abstract stack inspection}
\label{sec:CM}

In this section, we derive an abstract interpreter for the static
analysis of a higher-order language with stack inspection.  Following
the outline of Section~\ref{sec:cek-to-acesk} and~\ref{sec:krivine},
we start from the tail-recursive CM machine of
\citet{dvanhorn:Clements2004Tailrecursive}, perform a pointer
refinement on continuations, then abstract the semantics by a
parameterized bounding of the store.

\subsection{The \texorpdfstring{$\boldsymbol{\lambda_{\mathrm{sec}}}$}{lambda-sec}-calculus and stack-inspection}

The $\lambda_{\mathrm{sec}}$-calculus of
\citet{dvanhorn:Pottier2005Systematic} is a call-by-value
$\lambda$-calculus model of higher-order stack inspection.
We present the language as given by
\citet{dvanhorn:Clements2004Tailrecursive}.

All code is statically annotated with a given set of permissions $R$,
chosen from a fixed set $\Perm$.  A computation whose source code was
statically annotated with a permission may \emph{enable} that
permission for the dynamic extent of a subcomputation.  The
subcomputation is privileged so long as it is annotated with the same
permission, and every intervening procedure call has likewise been
annotated with the privilege.
\[
\begin{grammar}
  \expr &\in& \syn{Exp} &\produces&\dots  
  \opor \failexpr \opor \grantform{R}{\expr}
  \opor
  \testform{R}{\expr}{\expr}
  \opor \frameform{R}{\expr}
\end{grammar}
\]
A $\failexpr$ expression signals an exception if evaluated; by
convention it is used to signal a stack-inspection failure.
A $\frameform{R}{\expr}$ evaluates $\expr$ as the principal $R$,
representing the permissions conferred on $\expr$ given its origin.
A $\grantform{R}{\expr}$ expression evaluates as $\expr$ but with the
permissions extended with $R$ enabled.
A $\testform{R}{\expr_0}{\expr_1}$ expression evaluates to $\expr_0$
if $R$ is enabled and $\expr_1$ otherwise.

A trusted annotator consumes a program and the set of permissions it
will operate under and inserts frame expressions around each
$\lambda$-body and intersects all grant expressions with this set of
permissions.  We assume all programs have been properly annotated.

Stack inspection can be understood in terms of an $\OK$ predicate on
an evaluation contexts and permissions.
The predicate determines whether the given permissions are enabled
for a subexpression in the hole of the context.
The $\OK$ predicate holds whenever the context can be traversed from
the hole outwards and, for each permission, find an enabling grant
context without first finding a denying frame context.

\subsection{The CM machine}

The continuation marks (CM) machine of
\citet{dvanhorn:Clements2004Tailrecursive} is a properly
tail-recursive extended CESK machine for interpreting higher-order
languages with stack-inspection.
In the CM machine, continuations are annotated with
\emph{marks}~\cite{dvanhorn:Clements2001Modeling}, which, for the
purposes of stack-inspection, are finite maps from permissions
to $\{\no,\grant\}$:
\[
\begin{array}{rcl}
\cont & \produces & \mtk^m \opor \ark^m(\expr,\env,\cont)
\opor \fnk^m(\den,\env,\cont)\text.
\end{array}
\]
We write $\cont[R\mapsto c]$ to denote updating the marks on $\cont$ to
$m[R\mapsto c]$.

The CM machine is defined as follows, where transitions that are
straightforward adaptations of the corresponding CESK$^\star$
transitions to incorporate continuation marks are omitted:
\begin{machine}{\mathit{CM}}
\mrule{\langle\failexpr,\env,\store,\cont\rangle}
      {\langle\failexpr,\env,\store,\mtk^\emptyset\rangle}
\mnext
\mrule{\langle\frameform{R}{\expr},\env,\store,\cont\rangle}
      {\langle\expr,\env,\store,\cont[\overline{R}\mapsto\no]\rangle}
\mnext
\mrule{\langle\grantform{R}{\expr},\env,\store,\cont\rangle}
      {\langle\expr,\env,\store,\cont[R \mapsto\grant]\rangle}
\mnext
\mrule{\langle\testform{R}{\expr_0}{\expr_1},\env,\store,\cont\rangle}
      {\begin{cases}
          \langle\expr_0,\env,\store,\cont\rangle  & \mbox{ if }\OK(R,\cont),
          \\
          \langle\expr_1,\env,\store,\cont\rangle &\mbox{ otherwise}
      \end{cases}}
\end{machine}%
It relies on the
$\OK$ predicate to determine whether the permissions in $R$ are
enabled.
The $\OK$ predicate performs the traversal of the context (represented
as a continuation) using marks to determine which permissions have
been granted or denied:
\[
\begin{array}{rcl}
\OK(\emptyset,\cont)\\[1mm]
\OK(R,\mtk^m) & \iff & (R\cap m^{-1}(\no)=\emptyset)\\[1mm]
\left.
\begin{array}{l}
\OK(R,\fnk^m(\den,\env,\cont)) \\
\OK(R,\ark^m(\expr,\env,\cont)) 
\end{array}
\right\}
& \iff &
\begin{array}{@{}l}
(R \cap m^{-1}(\no) = \emptyset) \wedge \OK(R \setminus m^{-1}(\grant),\cont)
\end{array}
\end{array}
\]

The semantics of a program is given by the set of reachable states
from an initial machine configuration:
\[
\inj_{\mathit{CM}}(\expr) = \langle\expr,\mte,\mts,\mtk^\emptyset\rangle\text.
\]

\subsection{The abstract CM\texorpdfstring{$^\star$}{*} machine}

Store-allocating continuations, time-stamping, and
bounding the store yields the transition system given below.
It is worth noting that continuation marks are updated, not joined, in
the abstract transition system, just as in the concrete.

\begin{machine}{\widehat{\mathit{CM}}}
\mrule{\langle\failexpr,\env,\astore,\cont\rangle}
      {\langle\failexpr,\env,\astore,\mtk^\emptyset\rangle}
\mnext
\mrule{\langle\frameform{R}{\expr},\env,\astore,\cont\rangle}
      {\langle\expr,\env,\astore,\cont[\overline{R}\mapsto\no]\rangle}
\mnext
\mrule{\langle\grantform{R}{\expr},\env,\astore,\cont\rangle}
      {\langle\expr,\env,\astore,\cont[R \mapsto\grant]\rangle}
\mnext
\mrule{\langle\testform{R}{\expr_0}{\expr_1},\env,\astore,\cont\rangle}
      {\begin{cases}
          \langle\expr_0,\env,\astore,\cont\rangle  & \mbox{ if }\AOK(R,\cont,\astore),
          \\ 
          \langle\expr_1,\env,\astore,\cont\rangle &\mbox{ otherwise.}
        \end{cases}}
\end{machine}%

The $\AOK$ predicate approximates the pointer refinement of its
concrete counterpart $\OK$, which can be understood as tracing a path
through the store corresponding to traversing the continuation.  The
abstract predicate holds whenever there exists such a path in the
abstract store that would satisfy the concrete predicate:
\[
\begin{array}{rcl}
\AOK(\emptyset,\cont,\astore)\\[1mm]
\AOK(R,\mtk^m, \astore) & \iff & (R\cap m^{-1}(\no)=\emptyset)\\[1mm]
\left.
\begin{array}{l}
\AOK(R,\fnk^m(\den,\env,\addr),\astore) \\
\AOK(R,\ark^m(\expr,\env,\addr),\astore) 
\end{array}
\right\}
& \iff &
\begin{array}{@{}l}
(R \cap m^{-1}(\no) = \emptyset) \wedge \AOK(R \setminus m^{-1}(\grant),\cont,\astore)
\\
\mbox{where } \cont\in\astore(\addr)
\end{array}
\end{array}
\]
Consequently, in analyzing $\testform{R}{\expr_0}{\expr_1}$, $\expr_0$
is reachable only when the analysis can prove the $\OK^\star$
predicate holds on some path through the abstract store.

It is straightforward to define a structural abstraction map and
verify the abstract CM$^\star$ machine is a sound approximation of its
concrete counterpart:
\begin{theorem}[Soundness of the Abstract CM$^\star$ Machine]\ \\
  If $\state \longmapsto_{\mathit{CM}} \state' \text{ and }
  \alpha(\state) \wt \astate$, then there exists an abstract state
  $\astate'$, such that $\astate
  \longmapsto_{\widehat{\mathit{CM}^\star_\tm}} \astate'$ and
  $\alpha(\state') \wt \astate'$.
\end{theorem}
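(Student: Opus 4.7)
The plan follows the template established by Theorem~\ref{thm:soundness} for the Abstract CESK$^\star$ machine. First I would introduce the intermediate CM$^\star$ machine, obtained by pointer-refining continuations, and the time-stamped CM$^\star_\tm$ machine, and state two lock-step lemmas that parallel Lemmas~\ref{lem:pointer-equiv} and~\ref{lem:time-stamp-equiv}: the CM and CM$^\star$ machines simulate each other step-for-step, and threading time-stamps through CM$^\star$ yields a machine that again operates in lock-step. With these in hand, it suffices to prove the simulation property between $\longmapsto_{\mathit{CM}^\star_\tm}$ and $\longmapsto_{\widehat{\mathit{CM}^\star_\tm}}$.

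Next I would proceed by case analysis on the applicable concrete transition, assuming $\alpha(\state) \wt \astate$. The transitions inherited from the underlying CESK$^\star_\tm$ machine---variable reference, application, and returns into $\ark$ and $\fnk$ frames---are handled exactly as in Theorem~\ref{thm:soundness}: they hinge only on the abstract store soundly approximating the concrete store. The $\failexpr$, $\mathbf{frame}$, and $\mathbf{grant}$ transitions are deterministic, purely syntactic updates to the control string and the top mark of the continuation, so soundness follows by direct calculation, together with the observation that continuation marks are updated (not joined) in the abstract machine exactly as they are in the concrete, so $\alpha$ commutes with the mark-update operation $\cont[R \mapsto c]$.

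The substantive case is the $\mathbf{test}$ transition, whose target depends on the outcome of the $\OK$ predicate. This requires a key auxiliary lemma: if $\alpha(\cont) \wt \acont$ and $\alpha(\store) \wt \astore$, then $\OK(R,\cont) \Rightarrow \AOK(R,\acont,\astore)$. The proof is by induction on $\cont$: the base cases $\mtk^m$ are immediate from the shared definition on $m^{-1}(\no)$; for $\fnk^m$ and $\ark^m$ frames, the recursive call on $\OK$ consults the concrete tail continuation, while $\AOK$ is permitted to existentially pick any $\cont' \in \astore(\addr)$---choosing the one whose abstraction dominates the concrete tail discharges the induction step. Given this lemma, when $\OK(R,\cont)$ holds the abstract machine takes the true branch and the new state's abstraction is bounded by $\astate'$ as before; when $\OK(R,\cont)$ fails, the $\mathbf{test}$ rule moves to $\expr_1$ regardless of marks, and the required approximation on the resulting state is immediate.

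The main obstacle will be the $\AOK$--$\OK$ correspondence lemma, since $\AOK$ traces an existentially quantified path through an abstract store that may contain many continuations at each address; the inductive step must carefully match the single concrete tail against one of potentially many abstract alternatives. A secondary subtlety is that the abstraction map on continuations must leave the mark component $m$ concrete while abstracting the frame's constituents; because $\Perm$ is fixed and finite, this preserves finiteness of the abstract state-space and allows the $m^{-1}(\no)$ and $m^{-1}(\grant)$ computations in $\AOK$ to proceed verbatim.
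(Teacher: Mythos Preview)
The paper does not actually prove this theorem; it only asserts that it is ``straightforward to define a structural abstraction map and verify'' soundness. Your plan---reducing to a time-stamped, pointer-refined CM$^\star_\tm$ machine via lock-step lemmas and then doing a case split on transitions---is structurally sound and more detailed than anything the paper offers.

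There is, however, a genuine gap in your treatment of the $\mathbf{test}$ transition when $\OK(R,\cont)$ fails concretely. You write that the abstract rule ``moves to $\expr_1$ regardless of marks,'' but as written the abstract rule moves to $\expr_1$ only when $\AOK(R,\cont,\astore)$ is \emph{false}. Your auxiliary lemma gives only the forward implication $\OK \Rightarrow \AOK$; the converse fails, because $\AOK$ existentially quantifies over continuations in the abstract store and thus over-approximates $\OK$. So it is perfectly possible that $\OK$ fails while $\AOK$ holds, and under a deterministic reading of the rule the abstract machine then steps to $\expr_0$ rather than $\expr_1$---breaking the simulation, since the order on $\syn{Exp}$ is flat. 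To close the gap you must either (i) argue that the abstract $\mathbf{test}$ rule is to be read nondeterministically, so that the $\expr_1$ branch is always available, or (ii) introduce a dual may-fail predicate and prove the corresponding implication for it. The paper's remark that ``$\expr_0$ is reachable only when the analysis can prove the $\OK^\star$ predicate holds on some path'' hints at reading (i), but you need to make this explicit rather than dismissing the case as immediate.
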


\section{Pushdown abstractions}

Pushdown analysis is an alternative paradigm for the analysis of
programs in which the run-time program stack is precisely modeled with
the stack of a pushdown system. Consequently, a pushdown analysis can
exactly match control flow transfers from calls to returns, from
throws to handlers, and from breaks to labels. This in contrast with
the traditional approaches of finite-state abstractions we have
considered so far,  which necessarily model the control stack with
finite bounds.

Although pushdown abstractions have been well-known in the setting of
first-order
languages~\cite{mattmight:Bouajjani:1997:PDA-Reachability,mattmight:Reps:1998:CFL,dvanhorn:Kodumal2004Set},
they have eluded extension to a higher-order setting until the recent
work of \citet{dvanhorn:Vardoulakis2011CFA2}.

In this section, we show that pushdown analysis has a natural
expression as an abstraction of a classical abstract machine.  We
revisit our recipe for abstracting the CESK machine and demonstrate
that by store-allocating bindings but \emph{not} continuations, a
computable pushdown model is obtained.  The pushdown model more
closely resembles its concrete counterpart, hence establishing
soundness is even easier.  But since the resulting state-space is
potentially infinite, decidability becomes less straightforward.  We
show that the resulting abstract machine is equivalent to a pushdown
automata, making it clear that reachability of machine states is a
decidable property.

\subsection{The abstract pushdown CESK$^\star$ machine}

We have seen that store-allocating bindings and continuations is a
powerful technique for transforming abstract machines in to their
computable, finite-state approximations.  The key to obtaining a
pushdown model is to simply replay the same steps but to keep
continuations on the control stack rather than moving them to the
store.  In other words, starting from the CESK machine, which puts
bindings in the store, all that is needed for a pushdown analysis is
to bound the store.  

\[
\begin{grammar}
  \state &\in& \State &=& 
  \syn{Exp} \times \s{Env} \times \s{Store} \times \s{Cont}
  \\
  \env &\in &\s{Env} &=& \syn{Var} \parto \s{Addr}
  \\
  \store &\in &\s{Store} &=& \s{Addr} \parto \Storable
  \\
  \sto &\in &\Storable &=& \syn{Lam} \times \Env
  \\
  \addr,\addrnext,\addrnextnext &\in &\s{Addr} & & \text{an infinite set}
  \text.
\end{grammar}
\]

The machine is defined as:
\begin{machine}
{\widehat{\mathit{CESK}}} 
\mrule{\langle\vv, \env,\astore, \kappa\rangle} {\langle\den,\env', \astore, \kappa\rangle \mbox{
  where }(\den,\env') \in \astore(\env(\vv))} 
  \mnext
  \mrule{\langle\appform{\expr_0}{\expr_1}, \env, \astore, \kappa\rangle}
  {\langle\expr_0, \env, \astore, \ark(\expr_1,
  \env, \cont)\rangle} 
  \mnext 
  \mrule{\langle\den,\env,\astore,
  \ark(\expr,\env',\cont) \rangle}
  {\langle\expr,\env',\astore,\fnk(\den,\env,\cont)\rangle}
  \mnext
  \mrule{\langle\den,\env,\astore,
    \fnk(\lamform{\vv}{\expr},\env',\cont) \rangle}
        {\langle\expr,\env'[\vv \mapsto \addr], \astore\join[\addr \mapsto
            \set{(\den,\env)}], \kappa\rangle,}
 \\ && \mbox{ where }\addr=\aalloc(\astate)\text.
\end{machine}%

The abstract pushdown CESK machine makes a nondeterministic choice
when dereferencing a variable, however it is completely deterministic
in its choice of continuations since the control stack is modeled as a
stack just as in the concrete CESK machine.  Since the stack has no
bound, we no longer have a finite-state space, but as we will see it
is still possible to decide whether a given machine state is reachable
from the initial configuration.

\subsection{Soundness and computability}

\begin{theorem}[Soundness of the Abstract Pushdown CESK Machine]\ \\   
  If $\state \longmapsto_{\mathit{CEK}} \state'$ and $\alpha(\state)
  \wt \astate$, then there exists an abstract state $\astate'$, such
  that $\astate \longmapsto_{\widehat{\mathit{CESK}}}
  \astate'$ and $\alpha(\state') \wt \astate'$.
\end{theorem}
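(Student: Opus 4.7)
The plan is to follow the same template used for Theorem~\ref{thm:soundness} (soundness of the abstract CESK$^\star$ machine), but exploit the fact that continuations are \emph{not} abstracted here, only the bindings store is. First, by Lemma~\ref{lem:store-equiv}, it suffices to establish the simulation with respect to $\longmapsto_{\mathit{CESK}}$ rather than $\longmapsto_{\mathit{CEK}}$, since the CEK and CESK machines operate in lock-step. This is the right intermediate machine because the pushdown abstraction is obtained directly by bounding the CESK store, without the additional pointer-refinement of continuations.

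Next I would define the structural abstraction map $\alpha \colon \State_{\mathit{CESK}} \to \aState_{\widehat{\mathit{CESK}}}$ by the obvious point-, component-, element-, and member-wise lift: $\alpha$ acts as the identity on expressions and on the continuation skeleton (since $\s{Cont}$ in the abstract pushdown machine has exactly the same grammar as in the CESK machine, modulo that the stored environments inside $\ark$ and $\fnk$ frames are themselves abstracted), and $\alpha$ collapses the store via $\alpha(\sigma) = \lambda \aaddr.\ \bigsqcup_{\alpha(a) = \aaddr} \{\alpha(\sigma(a))\}$, with $\alpha$ on addresses chosen so that $\aalloc$ soundly simulates $\alloc$. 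The partial order $\wt$ on $\aState$ is the natural lifting, with flat orders on $\syn{Exp}$, $\s{Addr}$, and on the continuation constructors.

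Then I would do case analysis on the CESK rule used in $\state \longmapsto_{\mathit{CESK}} \state'$. The application, argument-shift, and variable-lookup cases are essentially identical to the corresponding cases in the proof of Theorem~\ref{thm:soundness}: the first two modify only the continuation, which is preserved up to $\wt$-monotone changes under $\alpha$; the variable case uses that $\alpha(\sigma)(\alpha(\env(\vv))) \sqsupseteq \{\alpha(\sigma(\env(\vv)))\}$, so any concrete closure found in the store is among the candidates offered nondeterministically by the abstract machine. The function-call case uses that $\alpha(\sigma[a\mapsto (\den,\env)]) \wt \alpha(\sigma) \sqcup [\alpha(a) \mapsto \{\alpha(\den,\env)\}]$, which matches exactly the store update performed by the abstract rule at $\aalloc(\astate)$; soundness of $\aalloc$ gives a compatible abstract address.

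The step I expect to be the easiest, and the one that distinguishes this proof from the finite-state case, is the treatment of continuations: there is no approximation introduced on the control stack, so the continuation component of $\state'$ maps under $\alpha$ directly to the continuation of $\astate'$ with no need to dereference store-allocated pointers or introduce nondeterministic choices among possible continuations. There is therefore no main obstacle to the simulation argument itself; the delicate issue in the pushdown setting is rather \emph{decidability} of reachability over the (now infinite) state space, which is handled separately by reduction to pushdown automata reachability and is not part of this theorem's statement.
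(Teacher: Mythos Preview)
Your proposal is correct and follows essentially the same approach as the paper: the paper's proof is a two-sentence sketch stating that the argument mirrors Theorem~\ref{thm:soundness} but is simplified because continuation frames correspond exactly, eliminating the nondeterministic choice over store-allocated continuations. Your reduction to $\longmapsto_{\mathit{CESK}}$ via Lemma~\ref{lem:store-equiv} (rather than all three lemmas) is the right adaptation, and your observation that the continuation component requires no approximation is precisely the simplification the paper highlights.
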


The proof follows the same structure as that of
theorem~\ref{thm:soundness}, and is in fact simplified since the
continuations frames of the machines exactly correspond, eliminating
the need to consider the nondeterministic choice of a continuation
residing at a store location.

The more interesting aspect of the pushdown abstraction is
decidability. Notice that since the stack has a recursive, unbounded
structure, the state-space of the machine is potentially infinite so
deciding reachability by enumerating the reachable states will no
longer suffice.

\begin{theorem}[Decidability of the Abstract Pushdown CESK$^\star$ Machine]\ \\
  $\astate \in \avf_{\widehat{\mathit{CESK}}}(\expr)$ is
  decidable.
\end{theorem}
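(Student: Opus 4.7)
The plan is to show that the abstract pushdown CESK$^\star$ machine has an equivalent presentation as a pushdown automaton, from which decidability of state reachability follows by classical results on PDA configuration reachability.

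First I would isolate a finite \emph{control} component of each machine state, namely the triple $(\expr, \env, \astore)$, and treat the continuation $\cont$ as the stack. These triples are drawn from a finite set: expressions come from the (finite) source program; $\s{Addr}$ is finite by the bounding assumption of the abstraction; $\syn{Var}$ is finite; hence $\s{Env} = \syn{Var} \parto \s{Addr}$ is finite; $\Storable = \syn{Lam} \times \s{Env}$ is therefore finite; and so $\astore \in \s{Addr} \parto \Pow{\Storable}$ ranges over a finite set as well.

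Next, although the grammar of continuations textually nests a ``rest of the stack'' inside every frame, I would equivalently represent a continuation as a list of frame \emph{shapes} drawn from the finite alphabet $\StackAlpha = \{\mtk\} \cup \{\ark(\expr,\env) : \expr \in \syn{Exp}, \env \in \s{Env}\} \cup \{\fnk(\den,\env) : \den \in \Den, \env \in \s{Env}\}$, treating $\mtk$ as the empty stack. I would then verify that each rule of $\longmapsto_{\widehat{\mathit{CESK}}}$ corresponds to a standard pushdown operation under this view: the variable-lookup rule leaves the stack alone; the application rule pushes a single $\ark$ frame; the $\ark$-to-$\fnk$ rule swaps the top $\ark$ for a $\fnk$; and the function-return rule pops the top $\fnk$ and updates the control. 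In every case the new control state and the stack edit depend only on the current control and (at most) the top frame, which is exactly the data a pushdown transition may consult.

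Finally I would define a PDA $P$ whose control states are the finite set of triples $(\expr,\env,\astore)$, whose stack alphabet is $\StackAlpha$, and whose transitions mirror the four rules above. Reachability of an abstract state $(\expr,\env,\astore,\cont)$ from $\aInject(\expr_0)$ then reduces to reachability of the configuration $\langle(\expr,\env,\astore), \cont\rangle$ from $\langle(\expr_0,\mte,\mts),\mtk\rangle$ in $P$, which is decidable by classical saturation-based results on PDA reachability (the set of stacks reachable in a fixed control state forms an effectively computable regular language). The main obstacle is careful bookkeeping around the store: it is threaded through every transition and may grow monotonically on binding, so one must verify that every store update in the function-call rule depends only on the current control and the top frame, not on deeper stack content. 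Because $\astore \join [\addr \mapsto \set{(\den,\env)}]$ inspects only $\den$, $\env$, and the topmost $\fnk$ frame, and $\aalloc$ maps into the finite $\s{Addr}$, this condition is satisfied and the PDA simulation goes through.
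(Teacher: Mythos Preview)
Your proposal is correct and follows essentially the same approach as the paper: isolate the finite $(\expr,\env,\astore)$ control, represent continuations as lists of frames over a finite stack alphabet, and observe that the transition relation is then that of a PDA, so reachability is decidable. Your write-up is in fact more detailed than the paper's sketch---you explicitly verify finiteness of each component, classify each rule as a push/pop/swap, and flag the dependence of the store update on only the top frame---but the underlying argument is the same.
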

\begin{proof}
Observe that the control string, environment, and store components of
a machine state are drawn from finite sets.  Continuations may be
represented isomorphically as a list of stack frames:
\begin{align*}
\cont &= [f,\dots]  &
f &= \ark(\expr,\env) \opor \fnk(\den,\env)
\end{align*}
Furthermore, stack frames are drawn from a finite set since
expressions and environments are finite.  But now observe that the
abstract pushdown CESK machine is a pushdown automata: states of the
PDA are CES triples from the CESK machine; the PDA stack alphabet is
the alphabet of stack frames; and PDA transitions easily encode
machine transitions by mapping from a CES triple and stack frame
to a new CES triple and pushing/popping the stack appropriately.
\end{proof}

\section{Related work}

The study of abstract machines for the $\lambda$-calculus began with
the SECD machine of \citet{dvanhorn:landin-64}, the systematic
construction of machines from semantics with the definitional
interpreters of \citet{mattmight:Reynolds:1972:Definitional}, the
theory of abstract interpretation with the POPL papers of
\citet{mattmight:Cousot:1977:AI,mattmight:Cousot:1979:Galois}, and
static analysis of the $\lambda$-calculus with the coupling of
abstract machines and abstract interpretation by
\citet{mattmight:Jones:1981:LambdaFlow}.
All have been active areas of research since their inception,
but only recently have well known abstract machines been connected
with abstract interpretation by \citet{dvanhorn:midtgaard-jensen-sas-08,dvanhorn:Midtgaard2009Controlflow}.
We strengthen the connection by demonstrating a general technique for
abstracting abstract machines.

\subsection{Abstract interpretation of abstract machines}

%

The approximation of abstract machine states for the analysis of
higher-order languages goes back to
\citet{mattmight:Jones:1981:LambdaFlow},
who argued abstractions of regular tree automata could solve the
problem of recursive structure in environments.
We re-invoked that wisdom to eliminate the recursive structure of
continuations by allocating them in the store.

\citet{dvanhorn:ashley-dybvig-toplas98} use a non-standard abstract
machine as the basis for their concrete semantics.  The machine is a
CES machine; continuations in the machine are eliminated by
transforming programs into explicit continuation-passing style.  The
machine also collects a \emph{cache}, which maps execution contexts
(roughly time-stamps in our setting) to a store describing that
context.  To abstract, the cache is restricted to a finite function,
which is ensured by allocating from a finite set of addresses just as
we have done.

\citet{dvanhorn:midtgaard-jensen-sas-08}
present a 0CFA for a CPS $\lambda$-calculus language.  The approach is
based on Cousot-style calculational abstract
interpretation~\cite{dvanhorn:Cousot98-5}, applied to a functional
language.  Like the present work, Midtgaard and Jensen start with an
``off-the-shelf'' abstract machine for the concrete semantics---in this
case, the CE machine of \citet{dvanhorn:Flanagan1993Essence}---and employ a
reachable-states model.  They then compose well-known Galois
connections to reveal a 0CFA with reachability in the style of
\citet{dvanhorn:ayers-phd93}.\footnote{Ayers derived an
  abstract interpreter by transforming (the representation of) a
  denotational continuation semantics of Scheme into a state
  transition system (an abstract machine), which he then approximated
  using Galois connections~\cite{dvanhorn:ayers-phd93}.}  The CE
machine is not sufficient to interpret direct-style programs, so the
analysis is specialized to programs in continuation-passing style.
Later work by \citet{dvanhorn:Midtgaard2009Controlflow} went on to
present a similar calculational abstract interpretation treatment of a
monomorphic CFA for an ANF $\lambda$-calculus.  The concrete semantics
are based on reachable states of the C$_a$EK
machine~\cite{dvanhorn:Flanagan1993Essence}.  The abstract semantics
approximate the control stack component of the machine by its top
element, which is similar to the labeled machine abstraction given in
Section~\ref{sec:labelled-acesk} when $k=0$.

Although our approach is not calculational like Midtgaard and
Jensen's, it continues in their tradition by applying abstract
interpretation to off-the-shelf tail-recursive machines.  We extend
the application to direct-style machines for a $k$-CFA-like
abstraction that handles tail calls, laziness, state, exceptions,
first-class continuations, and stack inspection.  We have extended
\emph{return flow analysis} to a completely direct style (no ANF or
CPS needed) within a framework that accounts for polyvariance.

\citet{mattmight:Harrison:1989:Parallelization} gives an
abstract interpretation for a higher-order language with control and
state for the purposes of automatic parallelization.
Harrison maps Scheme programs into an imperative intermediate
language, which is interpreted on a novel abstract machine.  The
machine uses a procedure string approach similar to that given in
Section~\ref{sec:labelled-acesk} in that the store is addressed by
procedure strings.
%
Harrison's first machine employs higher-order values to represent
functions and continuations and he notes, ``the straightforward
abstraction of this semantics leads to abstract domains containing
higher-order objects (functions) over reflexive domains, whereas our
purpose requires a more concrete compile-time representation of the
values assumed by variables. We therefore modify the semantics such
that its abstraction results in domains which are both finite and
non-reflexive.''
Because of the reflexivity of denotable values, a direct abstraction is not
possible, so he performs closure conversion on the (representation of)
the semantic function.
Harrison then abstracts the machine by bounding the procedure string
space (and hence the store) via an abstraction he calls stack
configurations, which is represented by a finite set of members, each
of which describes an infinite set of procedure strings. 

To prove that Harrison's abstract interpreter is correct he argues
that the machine interpreting the translation of a program in the
intermediate language corresponds to interpreting the program as
written in the standard semantics---in this case, the denotational
semantics of R$^3$RS.  On the other hand, our approach relies on well
known machines with well known relations to calculi, reduction
semantics, and other
machines~\cite{mattmight:Felleisen:1987:Dissertation,dvanhorn:Danvy:DSc}.
These connections, coupled with the strong similarities
between our concrete and abstract machines, result in minimal proof
obligations in comparison.  Moreover, programs are analyzed in
direct-style under our approach.





\subsection{Abstract interpretation of lazy languages}

Jones has analyzed non-strict functional
languages~\cite{mattmight:Jones:1981:LambdaFlow,dvanhorn:Jones2007Flow},
but that work has only focused on the by-name aspect of laziness and
does not address memoization as done here.
\citet{mattmight:Sestoft:1991:Dissertation} examines flow
analysis for lazy languages and uses abstract machines to prove
soundness.  In particular, Sestoft presents a lazy variant of
Krivine's machine similar to that given in Section~\ref{sec:krivine}
and proves analysis is sound with respect to the machine.  Likewise,
Sestoft uses Landin's SECD machine as the operational basis for
proving globalization optimizations correct.  Sestoft's work differs
from ours in that analysis is developed separately from the abstract
machines, whereas we derive abstract interpreters directly from
machine definitions.
\citet{dvanhorn:Faxen1995Optimizing} uses a type-based
flow analysis approach to analyzing a functional language with
explicit thunks and evals, which is intended as the intermediate
language for a compiler of a lazy language.
In contrast, our approach makes no assumptions about the typing
discipline and analyzes source code directly.




\subsection{Realistic language features and garbage collection}

Static analyzers typically hemorrhage precision in the presence of
exceptions and first-class continuations:
they jump to the top of the lattice of approximation when these
features are encountered.
%
%
Conversion to continuation- and exception-passing style can handle
these features without forcing a dramatic ascent of the lattice of
approximation~\cite{mattmight:Shivers:1991:CFA}.
The cost of this conversion, however, is lost knowledge---both
approaches obscure static knowledge of stack structure, by desugaring
it into syntax.


\citet{mattmight:Might:2006:GammaCFA} introduced
the idea of using abstract garbage collection to improve precision and
efficiency in flow analysis.
They develop a garbage collecting abstract machine for a CPS language
and prove it correct.  We extend abstract garbage collection to
direct-style languages interpreted on the CESK machine.




\subsection{Static stack inspection}

Most work on the static verification of stack inspection has focused
on type-based
approaches.
\citet{dvanhorn:Skalka2000Static} present a type
system for static enforcement of stack-inspection.
\citet{dvanhorn:Pottier2005Systematic} present type
systems for enforcing stack-inspection developed via a static
correspondence to the dynamic notion of security-passing style.
\citet{dvanhorn:Skalka-Smith-VanHorn:JFP08} present
type and effect systems that use linear temporal logic to express
regular properties of program traces and show how to statically
enforce both stack- and history-based security mechanisms.
Our approach, in contrast, is not typed-based and focuses only on
stack-inspection, although it seems plausible the approach of
Section~\ref{sec:CM} extends to the more general history-based
mechanisms.


\section{Conclusions and perspective}

We have demonstrated the utility of store-allocated continuations by
deriving novel abstract interpretations of the CEK, a lazy variant of
Krivine's, and the stack-inspecting CM machines.  These abstract
interpreters are obtained by a straightforward pointer refinement and
structural abstraction that bounds the address space, making the
abstract semantics safe and computable.  Our technique allows concrete
implementation technology to be mapped straightforwardly into that of
static analysis, which we demonstrated by incorporating abstract
garbage collection and optimizations to avoid abstract space leaks,
both of which are based on existing accounts of concrete GC and space
efficiency.  Moreover, the abstract interpreters properly model
tail-calls by virtue of their concrete counterparts being properly
tail-call optimizing.  Finally, our technique uniformly scales up to
both richer language features and richer analyses.  To support the
first claim, we extended the abstract CESK machine to analyze
conditionals, first-class control, exception handling, and state.  To
support the second, we have shown how to adapt the CESK machine for a
pushdown analysis.  We speculate that store-allocating bindings and
continuations is sufficient for a straightforward abstraction of most
existing machines.


\section*{Acknowledgments}  

We thank Olivier Danvy, Matthias Felleisen, Jan Midtgaard, Sam
Tobin-Hochstadt, and Mitchell Wand for discussions and suggestions.
We also thank the anonymous reviewers for their close reading and
helpful critiques; their comments have improved this work.
We are grateful to Olivier Danvy and Jan Midtgaard for writing a lucid
technical perspective on this work for \emph{Communications of the
  ACM}.

This material is based upon work supported by the National Science
Foundation under Grant No. 1035658.  The first author was supported by
the National Science Foundation under Grant No. 0937060 to the
Computing Research Association for the CIFellow Project.

\bibliographystyle{chicago}
\bibliography{bibliography}










\end{document}